\newtheorem{theorem}{Theorem}[section]
\newtheorem{lemma}[theorem]{Lemma}
\theoremstyle{definition}
\newtheorem{definition}[theorem]{Definition}
\newtheorem{example}[theorem]{Example}
\theoremstyle{remark}
\newtheorem{remark}[theorem]{Remark}
\numberwithin{equation}{section}
\newcounter{stepnum}
\newcommand{\step}{%
  \par
  \refstepcounter{stepnum}%
  \textbf{Step \arabic{stepnum}}.\enspace\ignorespaces
}
\begin{document}

\title{Solutions of the ${\rm SU}(n+1)$ Toda system from meromorphic functions}

\author{Yiqian Shi}
\address{CAS Wu Wen-Tsun Key Laboratory of Mathematics and  School of Mathematical \newline \indent Sciences, University of Science and Technology of China, Hefei 230026 China}
\email{yqshi@ustc.edu.cn}

\author{Chunhui Wei$^\dagger$}
\address{School of Gifted Young, University of Science and Technology of China\newline \indent Hefei 230026 China
}
\email{xclw3399@mail.ustc.edu.cn}
\thanks{$^\dagger$C.W. is the corresponding author.}

\author{Bin Xu}
\address{CAS Wu Wen-Tsun Key Laboratory of Mathematics and  School of Mathematical \newline \indent Sciences, University of Science and Technology of China, Hefei 230026 China}
\email{bxu@ustc.edu.cn}

\date{}

\dedicatory{}

\keywords{${\rm SU}(n+1)$ Toda system, meromorphic function, rational normal map}

\begin{abstract}
We consider the ${\rm SU}(n+1)$ Toda system on a simply connected domain $\Omega$ in ${\Bbb C}$, the $n=1$ case of which coincides with the Liouville equation $\Delta u+8e^u=0$. A classical result by Liouville
says that a solution of this equation on $\Omega$ can be represented by some non-degenerate meromorphic function on $\Omega$.
We construct a family of solutions parameterized by ${\rm PSL}(n+1,\,{\Bbb C})/{\rm PSU}(n+1)$ for the ${\rm SU}(n+1)$ Toda system from such a meromorphic function on $\Omega$, which generalizes the result of Liouville.
As an application, we find a new class of solvable ${\rm SU}(n+1)$ Toda systems with singular sources via cone spherical metrics on compact Riemann surfaces.
\end{abstract}

\maketitle


\section{Introduction to Toda system}\label{intro}
Let $\Omega$ be a simply connected domain in ${\Bbb C}$ and $n$ a positive integer.
Consider the following open ${\rm SU}(n+1)$ Toda system
\begin{equation}
	-\frac{1}{4}\triangle u_i=\sum_{j=1}^na_{ij}e^{u_j},\quad 1\leq i\leq n,\quad\text{on}\quad\Omega,\label{1.1}
\end{equation}
where $\displaystyle{\Delta=4\frac{\partial^2}{\partial z\partial\bar z}=\frac{\partial^2}{\partial x^2}+\frac{\partial^2}{\partial y^2}}$ is the Laplace operator in ${\Bbb C}={\Bbb R}^2$ and
\begin{equation}
(a_{ij})=
\begin{pmatrix}
2 & -1 & 0 &\cdots&\cdots& 0\\
-1& 2  & -1& 0 &\cdots& 0\\
0&-1& 2  & -1&\cdots& 0\\
\vdots&\vdots&\vdots&\vdots&\vdots&\vdots\\
0 &\cdots& 0 & -1 & 2 &-1\\
0 &\cdots&\cdots& 0 & -1 & 2
\end{pmatrix}\label{1.2}
\end{equation}
is the Cartan matrix of ${\rm SU}(n+1)$.

We quickly review the correspondence between solutions of (\ref{1.1}) and totally unramified
holomorphic curves $\Omega\to{\Bbb P}^n$, whose definition refers to \cite[p.270]{GH:1994}).
Let $(u_1,\cdots,u_n)$ be an $n$-tuple of real-valued smooth function on $\Omega$
and define the $(n+1)$-tuple $(w_0,\cdots, w_n)$ of functions on $\Omega$ by
\begin{equation}
\begin{cases}
	w_0:=-\frac{\sum_{i=1}^n(n-i+1)u_i}{2(n+1)}\\
	w_i:=w_0+\frac{1}{2}\sum_{j=1}^iu_j,\quad 1\leq i\leq n.\label{1.3}
\end{cases}
\end{equation}
Then $(u_1,\cdots,u_n)$ solves the
SU(n+1) Toda system (\ref{1.1}) if and only if there holds the Maurer-Cartan equation (i.e. the integrability condition)
$$\mathcal{U}_z-\mathcal{V}_{\bar{z}}=[\mathcal{U},\mathcal{V}],$$
where
\begin{equation}
\mathcal{U}=
\begin{pmatrix}
(w_0)_z&&&\\
&(w_1)_z&&\\
&&\ddots&\\
&&&(w_n)_z
\end{pmatrix}
+\begin{pmatrix}
0&&&\\
e^{w_1-w_0}&0&&\\
&\ddots&\ddots&\\
&&e^{w_n-w_{n-1}}&0
\end{pmatrix}
\nonumber
\end{equation}
\begin{equation}
\mathcal{V}=
-\begin{pmatrix}
(w_0)_{\bar{z}}&&&\\
&(w_1)_{\bar{z}}&&\\
&&\ddots&\\
&&&(w_n)_{\bar{z}}
\end{pmatrix}
-\begin{pmatrix}
0&e^{w_1-w_0}&&\\
&0&\ddots&\\
&&\ddots&e^{w_n-w_{n-1}}\\
&&&0
\end{pmatrix}
\nonumber
\end{equation}
By using the Frobenius theorem and the analytic-continuation-like argument,
we obtain for the solution $(u_1,\cdots, u_n)$ of (\ref{1.1}) the so-called {\it Toda map}
$\phi:\Omega\to {\rm SU}(n+1)$ which satisfies
\[\phi^{-1}{\rm d}\phi=\mathcal{U}{\rm d}z+\mathcal{V}{\rm d}\bar{z},\]
and is unique up to a multiple in ${\rm SU}(n+1)$ (\cite[Section 3.1]{JW:2002}).
We cite here the correspondence between solutions of (\ref{1.1}) and totally unramified holomorphic curves $\Omega\to \mathbb{P}^n$ modulo unitary transformations on ${\Bbb P}^n$, which was established by Jost-Wang \cite[Section 3]{JW:2002}.

\begin{theorem}{\rm{\cite[Section 3]{JW:2002}}}\label{thm1.1} Let $\{e_0,\cdots, e_n\}$ be the standard basis in the Hermitian inner product space
${\Bbb C}^{n+1}$, which induces the canonical Hermitian inner products on
$\Lambda^k{\Bbb C}^{n+1}$ for $0\leq k\leq n$.
\begin{enumerate}
\item Let $f:\Omega\to\mathbb{CP}^n$ be a totally unramified  holomorphic curve. Then $f$ has a lifting $\hat{f}:\Omega\to\mathbb{C}^{n+1}$, unique up to a constant multiple lying in
     $\{\xi\in{\Bbb C}:\xi^{n+1}=1\}$, such that $\hat{f}\wedge \hat{f}^\prime\wedge \hat{f}^{(2)}\wedge\cdots\wedge \hat{f}^{(n)}=e_0\wedge\cdots\wedge e_n$, i.e. the Wronskian of
     of $\hat{f}$ equals one identically. Define $\Lambda_k(\hat{f}):=\hat{f}^{(0)}\wedge\cdots\wedge \hat{f}^{(k)}$ for $k=0,\cdots,n-1$ and
 \begin{equation}
u_i:=-\sum_{j=1}^na_{ij}\log\| \Lambda_{j-1}(\hat{f})\|^2(1\leq i\leq n)\label{1.4},
\end{equation}
where $\| \Lambda_k\big(\hat{f}(z)\big)\|^2:\Omega\to (0,\,+\infty)$ is the norm  of $\Lambda_k\big(\hat{f}(z)\big)$ in $\Lambda^k{\Bbb C}^{n+1}$ for each $k$ in $\{0,1,\cdots, n-1\}$. Then $(u_1,\cdots, u_n)$ is a solution of (\ref{1.1}).

\item Suppose that $\phi:\Omega\to {\rm SU}(n+1)$ is a Toda map associated to a solution $(u_1,\cdots,u_n)$ of (\ref{1.1}). Defining an $(n+1)$-tuple $(\hat{f}_0,\cdots,\hat{f}_n)$ of $\mathbb{C}^{n+1}$-valued functions on $\Omega$ by
$$
(\hat{f}_0,\cdots,\hat{f}_n)=\phi\cdot
\begin{pmatrix}
e^{w_0}&&&\\
&e^{w_1}&&\\
&&\ddots&\\
&&&e^{w_n}
\end{pmatrix},
$$
we find that $\hat{f}_0$ is a holomorphic ${\Bbb C}^{n+1}$-valued function on $\Omega$
which  satisfies $\hat{f}_0\wedge \hat{f}_0^\prime\wedge \hat{f}_0^{(2)}\wedge\cdots\wedge \hat{f}_0^{(n)}=e_0\wedge\cdots\wedge e_n$ and induces a totally unramified holomorphic curve $f_0:\Omega\to {\Bbb P}^n$. Moreover, $(u_1,\cdots,u_n)$ coincides with
the solution of $(\ref{1.1})$ constructed from the holomorphic curve $f_0$ by (\ref{1.4}).

\end{enumerate}
\end{theorem}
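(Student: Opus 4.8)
The statement has two halves, and I would treat them in turn; the whole of (1) rests on a single analytic identity, the infinitesimal Pl\"ucker formula, which I regard as the technical core. For (1), the first task is to manufacture the normalized lift. On the simply connected domain $\Omega$ the curve $f$ has a global holomorphic lift $g\colon\Omega\to\mathbb{C}^{n+1}\setminus\{0\}$, and total unramifiedness says exactly that the Wronskian $g\wedge g'\wedge\cdots\wedge g^{(n)}$, a holomorphic section of the trivial line bundle $\Lambda^{n+1}\mathbb{C}^{n+1}$, is nowhere zero. Since $\Omega$ is simply connected, this Wronskian has a holomorphic $(n+1)$-st root $h$; the Leibniz rule together with column operations in $\Lambda^{n+1}$ gives $(\lambda g)\wedge(\lambda g)'\wedge\cdots\wedge(\lambda g)^{(n)}=\lambda^{\,n+1}\,g\wedge g'\wedge\cdots\wedge g^{(n)}$ for any holomorphic $\lambda$, so $\hat f:=g/h$ has unit Wronskian. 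Two such lifts differ by a nowhere-zero holomorphic scalar $\mu$ with $\mu^{n+1}\equiv1$, hence by a constant $(n+1)$-st root of unity, which is the asserted existence and uniqueness.

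Next the key lemma. Put $\Lambda_k:=\Lambda_k(\hat f)$ and $\psi_k:=\log\|\Lambda_k\|^2$, with the conventions $\psi_{-1}:=0$ and $\psi_n:=0$ (the latter legitimate since $\Lambda_n=e_0\wedge\cdots\wedge e_n$ has unit norm). Using $\partial_z\partial_{\bar z}\log\|F\|^2=\bigl(\|F\|^2\|F'\|^2-|\langle F',F\rangle|^2\bigr)/\|F\|^4$ for holomorphic $F$, everything reduces to the pointwise identity
\[
\|\Lambda_k\|^2\|\Lambda_k'\|^2-\bigl|\langle\Lambda_k',\Lambda_k\rangle\bigr|^2=\|\Lambda_{k-1}\|^2\,\|\Lambda_{k+1}\|^2,\qquad 0\le k\le n-1.
\]
To prove it, write $v_j:=\hat f^{(j)}$, $\omega:=v_0\wedge\cdots\wedge v_{k-1}=\Lambda_{k-1}$, and let $v_k^\perp,v_{k+1}^\perp$ be the components of $v_k,v_{k+1}$ orthogonal to $\operatorname{span}(v_0,\dots,v_{k-1})$. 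Since $\Lambda_k'=v_0\wedge\cdots\wedge v_{k-1}\wedge v_{k+1}$, one has $\Lambda_k=\omega\wedge v_k^\perp$, $\Lambda_k'=\omega\wedge v_{k+1}^\perp$ and $\Lambda_{k+1}=\omega\wedge v_k^\perp\wedge v_{k+1}^\perp$; the standard fact that wedging a decomposable multivector with vectors orthogonal to its support multiplies the relevant Gram determinants gives $\|\Lambda_k\|^2=\|\omega\|^2\|v_k^\perp\|^2$, $\langle\Lambda_k',\Lambda_k\rangle=\|\omega\|^2\langle v_{k+1}^\perp,v_k^\perp\rangle$ and $\|\Lambda_{k+1}\|^2=\|\omega\|^2\bigl(\|v_k^\perp\|^2\|v_{k+1}^\perp\|^2-|\langle v_k^\perp,v_{k+1}^\perp\rangle|^2\bigr)$, from which the identity falls out; equivalently $\partial_z\partial_{\bar z}\psi_k=e^{\psi_{k-1}+\psi_{k+1}-2\psi_k}$ for $0\le k\le n-1$. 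Since $u_i=-\sum_ja_{ij}\psi_{j-1}=\psi_{i-2}-2\psi_{i-1}+\psi_i$, applying $-\tfrac14\Delta=-\partial_z\partial_{\bar z}$ and rewriting the exponents via $u_j=\psi_{j-2}-2\psi_{j-1}+\psi_j$ yields $-\tfrac14\Delta u_i=2e^{u_i}-e^{u_{i-1}}-e^{u_{i+1}}$, which is $\sum_ja_{ij}e^{u_j}$; the boundary cases $i=1,n$ are handled automatically because the terms $\partial_z\partial_{\bar z}\psi_{-1}$ and $\partial_z\partial_{\bar z}\psi_n$ vanish, $\psi_{-1}$ and $\psi_n$ being constant.

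For (2), set $D:=\operatorname{diag}(e^{w_0},\dots,e^{w_n})$ and $\Phi:=\phi D$, so $\hat f_j=\Phi e_j$. Conjugation gives $\Phi^{-1}\mathrm d\Phi=D^{-1}(\mathcal U\,\mathrm dz+\mathcal V\,\mathrm d\bar z)D+D^{-1}\mathrm dD$, and a direct computation from the explicit $\mathcal U,\mathcal V$ shows that the $\mathrm d\bar z$-part is strictly upper triangular while the $\mathrm dz$-part has the form $(\text{diagonal})+L$, where $L$ is the lower shift $Le_j=e_{j+1}$, $Le_n=0$. The first statement forces $\partial_{\bar z}\hat f_0=\partial_{\bar z}(\Phi e_0)=0$, so $\hat f_0$ is holomorphic; the second gives $\hat f_j'=\alpha_j\hat f_j+\hat f_{j+1}$ for $j<n$ and suitable functions $\alpha_j$, whence by induction $\hat f_0^{(k)}=\hat f_k+(\text{a linear combination of }\hat f_0,\dots,\hat f_{k-1})$. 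Therefore $\hat f_0\wedge\hat f_0'\wedge\cdots\wedge\hat f_0^{(n)}=\hat f_0\wedge\cdots\wedge\hat f_n=(\det\Phi)\,e_0\wedge\cdots\wedge e_n=e_0\wedge\cdots\wedge e_n$, using $\det\phi=1$ and $\sum_{j=0}^nw_j=0$ (immediate from (1.3)); in particular the osculating vectors $\hat f_0,\dots,\hat f_0^{(n)}$ are everywhere independent, so $f_0$ is totally unramified with unit Wronskian. Finally, the triangular change of frame and the unitarity of $\phi$ give $\|\Lambda_k(\hat f_0)\|^2=\|\hat f_0\wedge\cdots\wedge\hat f_k\|^2=e^{2(w_0+\cdots+w_k)}\,\|(\phi e_0)\wedge\cdots\wedge(\phi e_k)\|^2=e^{2(w_0+\cdots+w_k)}$, so that $-\sum_ja_{ij}\log\|\Lambda_{j-1}(\hat f_0)\|^2=2(w_i-w_{i-1})=u_i$ by (1.3); thus (1.4) applied to $f_0$ returns the original solution, completing (2).

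The main obstacle is the key lemma in (1): producing the Hermitian-geometric identity $\|\Lambda_k\|^2\|\Lambda_k'\|^2-|\langle\Lambda_k',\Lambda_k\rangle|^2=\|\Lambda_{k-1}\|^2\|\Lambda_{k+1}\|^2$ cleanly, and getting the endpoint conventions $\psi_{-1}=\psi_n=0$ to absorb the boundary indices correctly; after that the Toda verification is purely Cartan-matrix bookkeeping. Part (2) is elementary once the conjugation by $D$ is set up, the only mild nuisance being to keep track of the normalization constants $\alpha_j$ and of $\det D$, which is why one must record that $\sum_jw_j=0$.
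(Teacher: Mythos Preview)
The paper does not prove Theorem~1.1 at all: it is quoted as a known result of Jost--Wang \cite{JW:2002}, introduced by the sentence ``We cite here the correspondence\ldots'' and labelled with the citation \cite[Section 3]{JW:2002}. The related Theorem~\ref{thm2.1}, which records the recursion relations \eqref{2.2}--\eqref{2.3} that you also derive in your part~(2), is likewise cited rather than proved. So there is no ``paper's own proof'' to compare against; what you have written is a self-contained argument that the paper deliberately outsources.

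That said, your sketch is correct and follows the standard route. For part~(1) the normalized-lift construction via an $(n+1)$st root of the Wronskian on a simply connected domain is right, and the identity $\|\Lambda_k\|^2\|\Lambda_k'\|^2-|\langle\Lambda_k',\Lambda_k\rangle|^2=\|\Lambda_{k-1}\|^2\|\Lambda_{k+1}\|^2$ is exactly the infinitesimal Pl\"ucker formula the paper cites from \cite[p.~269]{GH:1994}; your orthogonal-projection derivation of it is the usual one. The bookkeeping $u_i=\psi_{i-2}-2\psi_{i-1}+\psi_i$ with $\psi_{-1}=\psi_n=0$ then gives the Toda system as you say. For part~(2), your conjugation $\Phi=\phi D$ reproduces precisely the relations the paper later quotes as \eqref{2.2}--\eqref{2.3}: the $\mathrm d z$-part $\mathrm{diag}(2(w_j)_z)+L$ gives $\hat f_{j+1}=\hat f_j'-2(w_j)_z\hat f_j$, and the strictly upper-triangular $\mathrm d\bar z$-part gives $\partial_{\bar z}\hat f_0=0$ and $\partial_{\bar z}\hat f_k=-e^{u_k}\hat f_{k-1}$. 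The Wronskian computation via $\det\phi=1$ and $\sum_j w_j=0$, and the recovery $\log\|\Lambda_{k}(\hat f_0)\|^2=2(w_0+\cdots+w_k)$ using unitarity of $\phi$, are both correct. In short, you have supplied a clean proof of a theorem the paper merely cites.
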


\begin{definition}
We call $f_0:\Omega\to {\Bbb P}^n$ in Theorem \ref{thm1.1} (2) the (totally unramified) {\it holomorphic curve associated to the solution} $(u_1,\cdots, u_n)$ of the ${\rm SU}(n+1)$ Toda system.
Note that such an associated curve is unique up to a unitary transformation on ${\Bbb P}^n$ (\cite[(4.12)]{Griffiths:1974}).
In particular, if $n=1$, then the holomorphic curve $f_0:\Omega\to {\Bbb P}^1$ associated to
a solution $u_1$ of the Liouville equation
$\Delta u_1+8e^{u_1}=0$ turns out to be a non-degenerate meromorphic function on $\Omega$.
\end{definition}

The organization of this manuscript goes as follows. We state the main results (Theorems \ref{thm:rn} and \ref{m.s.}) in Section 2 and prove the key lemma of Theorem \ref{thm:rn} in Section 3. In Section 4 we 
introduce the readers carefully the long journey which motivates us Theorem \ref{thm:rn}. We prove 
Theorem \ref{m.s.} in the last section.

\section{Main results}
Motivated by the following result, we investigate a particular class of solutions of ${\rm SU}(n+1)$ Toda system (\ref{1.1}) whose
associated curves are obtained by compositions of linear transformations on ${\Bbb P}^n$,  rational normal maps, and non-degenerate meromorphic functions on $\Omega$. Moreover, we find an analytic formula for such solutions.

\begin{theorem}
\label{thm:rn}
Let $u:\Omega\to {\Bbb R}$ be a solution of the Liouville equation $\Delta u+8e^u=0$ on $\Omega$
and $v:\Omega\to\mathbb{P}^1$ a curve associated to $u$. Setting
$$c_i:=\ln\, i+\ln\,(n+1-i)\quad {\rm for}\quad i\in\{1,2,\cdots, n\},$$
we find
that $(u_1,\cdots, u_n):=(u+c_1,\cdots, u+c_n)$ solves (\ref{1.1}) and
$f=r_n\circ v$ is an associated curve of it, where $r_n:{\Bbb P}^1\to {\Bbb P}^n$ is a rational normal map
with form
$$r_n:[z_0,z_1]\mapsto\left[\sqrt{\frac{1}{n!}}\,z_0^n,\sqrt{\frac{1}{(n-1)!1!}}\,z_0^{n-1}z_1,
\cdots,\sqrt{\frac{1}{n!}}\,z_1^n\right].$$
\end{theorem}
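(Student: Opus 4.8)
The plan is to apply Theorem~\ref{thm1.1}(1) to the curve $f=r_n\circ v$. Let $\hat v=(g_0,g_1):\Omega\to{\Bbb C}^2$ be the lifting of $v$ with unit Wronskian, so $g_0g_1'-g_1g_0'\equiv 1$, and, since $v$ is a curve associated to $u$, the case $n=1$ of (\ref{1.4}) forces $|g_0|^2+|g_1|^2=\|\hat v\|^2=e^{-u/2}$. Since $r_n$ is given by the degree-$n$ monomials with coefficients $\binom nk^{1/2}/\sqrt{n!}=1/\sqrt{k!(n-k)!}$, the natural lifting of $f$ is $\hat f:=\sum_{k=0}^n (k!(n-k)!)^{-1/2}g_0^{\,n-k}g_1^{\,k}\,e_k$, which never vanishes because $\|\hat f\|^2=\tfrac1{n!}(|g_0|^2+|g_1|^2)^n>0$. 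The first step is to check that $\hat f$ already has Wronskian $e_0\wedge\cdots\wedge e_n$, so that $f$ is totally unramified and $\hat f$ is exactly the lifting in Theorem~\ref{thm1.1}(1): by multilinearity of the determinant, $\hat f\wedge\hat f'\wedge\cdots\wedge\hat f^{(n)}$ equals $\big(\prod_{k=0}^n(k!(n-k)!)^{-1/2}\big)$ times the Wronskian of the monomials $g_0^n,g_0^{n-1}g_1,\dots,g_1^n$, the latter being the classical identity $\big(\prod_{k=1}^n k!\big)(g_0g_1'-g_1g_0')^{n(n+1)/2}=\prod_{k=1}^n k!$, and the elementary identity $\prod_{k=0}^n k!(n-k)!=\big(\prod_{k=0}^n k!\big)^2$ makes the two scalar factors cancel.

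The heart of the argument is the formula, to be proved in Section~3,
\begin{equation*}
\big\|\Lambda_k(\hat f)\big\|^2=\alpha_k\,\big(|g_0|^2+|g_1|^2\big)^{(k+1)(n-k)}=\alpha_k\,e^{-(k+1)(n-k)u/2},\qquad \alpha_k:=\prod_{l=0}^k\frac{l!}{(n-l)!}\,,
\end{equation*}
for $k=0,1,\dots,n-1$. I would prove it in two stages. For the shape of the function: the coefficients $\binom nk^{1/2}/\sqrt{n!}$ are precisely those making $r_n$ intertwine the ${\rm SU}(2)$-action on ${\Bbb P}^1$ with the ${\rm SU}(n+1)$-action on ${\Bbb P}^n$ through the $n$-th symmetric power of the standard representation, so $z\mapsto\Lambda_k(\hat f)(z)$ is the $k$-th osculating curve of a rational normal curve and hence lies in the Cartan summand ${\rm Sym}^{(k+1)(n-k)}{\Bbb C}^2\subset\Lambda^{k+1}{\Bbb C}^{n+1}$; consequently $\|\Lambda_k(\hat f)\|^2$ is an ${\rm SU}(2)$-invariant function of $\hat v$ which, once $g_0g_1'-g_1g_0'\equiv1$ is used, is homogeneous of bidegree $\big((k+1)(n-k),(k+1)(n-k)\big)$ in $(g_0,g_1)$, and the only such invariant is a constant times $(|g_0|^2+|g_1|^2)^{(k+1)(n-k)}$. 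To pin down $\alpha_k$ one specializes to $g_0\equiv1,\ g_1=z$ (still of unit Wronskian) and evaluates at $z=0$, where $\hat f^{(l)}(0)=\tfrac{l!}{\sqrt{l!(n-l)!}}\,e_l$, hence $\Lambda_k(\hat f)(0)=\big(\prod_{l=0}^k\tfrac{l!}{\sqrt{l!(n-l)!}}\big)e_0\wedge\cdots\wedge e_k$ and $\|\Lambda_k(\hat f)(0)\|^2=\prod_{l=0}^k\tfrac{l!}{(n-l)!}=\alpha_k$, which is also the value of the right-hand side at $z=0$.

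Granting this key lemma, I would finish by feeding it into (\ref{1.4}). Set $L_j:=\log\|\Lambda_{j-1}(\hat f)\|^2$ for $1\le j\le n$ and $L_0:=0$, $L_{n+1}:=\log\|\hat f\wedge\cdots\wedge\hat f^{(n)}\|^2=0$. Because the Cartan matrix (\ref{1.2}) is tridiagonal with $2$'s on the diagonal and $-1$'s immediately off it, (\ref{1.4}) becomes the discrete second difference $u_i=L_{i-1}-2L_i+L_{i+1}$ for $1\le i\le n$. By the formula above, $L_j=\log\alpha_{j-1}-\tfrac12 j(n+1-j)\,u$, with the conventions $\alpha_{-1}=\alpha_n=1$ (consistent with $L_0=L_{n+1}=0$). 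Hence the $u$-part of $u_i$ is $-\tfrac12$ times the second difference of the quadratic $j\mapsto j(n+1-j)$, i.e.\ $-\tfrac12\cdot(-2)=1$, contributing exactly $u$; and the constant part, $\log\alpha_{i-2}-2\log\alpha_{i-1}+\log\alpha_i$, telescopes by way of $\alpha_k/\alpha_{k-1}=k!/(n-k)!$ to $\log i+\log(n+1-i)=c_i$. Thus $(u_1,\dots,u_n)=(u+c_1,\dots,u+c_n)$, so by Theorem~\ref{thm1.1}(1) it solves (\ref{1.1}); and since this solution is produced via (\ref{1.4}) from the totally unramified curve $f=r_n\circ v$, Theorem~\ref{thm1.1} together with the uniqueness up to unitary transformation of the associated curve shows that $f$ is an associated curve of it.

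The main obstacle is the ``shape'' part of the key lemma: the assertion that the osculating curves of a rational normal curve stay inside the Cartan summands is classical but not formal, so one should either cite it precisely or replace that step by a direct evaluation of $\|\Lambda_k(\hat f)\|^2=\det\big(\langle\hat f^{(a)},\hat f^{(b)}\rangle\big)_{0\le a,b\le k}$ via the Cauchy--Binet formula and the explicit sub-Wronskians of the monomials $g_0^{\,n-k}g_1^{\,k}$, which is elementary but computationally heavier; the remaining identities in the argument are routine.
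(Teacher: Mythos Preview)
Your outline is sound and the second-difference computation from (\ref{1.4}) is carried out cleanly, but your route is genuinely different from---and heavier than---the paper's. The paper never computes $\|\Lambda_k(\hat f)\|^2$ for $k\ge 1$. Instead it first verifies by a one-line substitution that $(u+c_1,\dots,u+c_n)$ already solves (\ref{1.1}) (indeed $\sum_j a_{ij}e^{c_j}=2$ for every $i$, so $\sum_j a_{ij}e^{u+c_j}=2e^u=-\tfrac14\Delta u$); then, after the same Wronskian normalization you give (their Lemma~\ref{lm:wr1}), it computes only $\|\hat f\|^2=\tfrac1{n!}(|g_0|^2+|g_1|^2)^n$, which is immediate from the binomial theorem, and applies the infinitesimal Pl\"ucker formula $u_1=\log\big(\partial_z\partial_{\bar z}\log\|\hat f\|^2\big)$ to read off $u_1=u+\ln n=u+c_1$. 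Since in the system (\ref{1.1}) the first component determines all the others recursively (the $i$-th equation gives $e^{u_{i+1}}=2e^{u_i}-e^{u_{i-1}}+\tfrac14\Delta u_i$), matching $u_1$ already forces the solution produced by $\hat f$ via (\ref{1.4}) to coincide with $(u+c_i)$, and hence $f=r_n\circ v$ is an associated curve.

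What your approach buys is the attractive closed formula $\|\Lambda_k(\hat f)\|^2=\alpha_k\,e^{-(k+1)(n-k)u/2}$ and a symmetric derivation of every $u_i$ at once, without appealing to the Pl\"ucker formula or to the cascade structure of (\ref{1.1}). The price is exactly the obstacle you flag: the ${\rm SU}(2)$-invariance argument, as written, does not yet explain why the derivatives $g_0^{(j)},g_1^{(j)}$ disappear from $\|\Lambda_k(\hat f)\|^2$ once $g_0g_1'-g_1g_0'=1$ is imposed---equivalently, why the $k$-th associated curve of $r_n$ lands in the Cartan summand $\mathrm{Sym}^{(k+1)(n-k)}\mathbb{C}^2$ with the \emph{specific} lift depending only on $(g_0,g_1)$. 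That fact is classical and can also be obtained by the Cauchy--Binet computation you mention, but either way it is real work that the paper's shortcut avoids entirely.
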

\begin{proof}
A direct computation shows that $(u_1,\cdots, u_n):=(u+c_1,\cdots, u+c_n)$ does solve (\ref{1.1}).
Denote by $v=[v_0,\, v_1]$ the curve $v:\Omega\to {\Bbb P}^1$ associated to $u$ such that $v_0(z)v'_1(z)-v_1(z)v'_0(z)\equiv 1$ on $\Omega$. By Lemma \ref{lm:wr1}, the canonical lifting
$$\hat{f}=\left(\sqrt{\frac{1}{n!}}\,v_0^n:\sqrt{\frac{1}{(n-1)!1!}}\, v_0^{n-1}v_1:
\cdots:\sqrt{\frac{1}{n!}}\, v_1^n\right):\Omega\to {\Bbb C}^{n+1}$$
of the curve
$$f=r_n\circ v=\left[\sqrt{\frac{1}{n!}}\,v_0^n:\sqrt{\frac{1}{(n-1)!1!}}\, v_0^{n-1}v_1:
\cdots:\sqrt{\frac{1}{n!}}\, v_1^n\right]:\Omega\to {\Bbb P}^n$$
has Wronskian $\equiv 1$  i.e. $\hat{f}\wedge \hat{f}'\wedge\cdots\wedge \hat{f}^{(n)}\equiv e_0\wedge\cdots\wedge e_n$ on $\Omega$(It also means that $f$ is totally unramified).\par
It suffices to check that $u+\ln n$ equals
the first component $u_1$ of
solution $(u_1,\cdots, u_n)$ from the lifting $\hat f$ of the curve $f$. Actually, we have
\begin{equation}
\begin{aligned}
 u_1&=\log\left(\frac{\| \Lambda_1(\hat{f})\|^2}{\|\hat{f}\|^4}\right)\\
	&=\log\left(\frac{\partial^2}{\partial z\partial \bar{z}}\log\| \hat{f}\|^2\right)\\
	&=\log\left(\frac{\partial^2}{\partial z\partial \bar{z}}\log\Big(n!(|v_0|^2+|v_1|^2)^n\Big)\right)\\
	&=\log\left(\frac{\partial^2}{\partial z\partial \bar{z}}\log(|v_0|^2+|v_1|^2)\right)+\ln n\\
	&=u+\ln n,
\end{aligned}
\end{equation}
where we use the infinitesimal Pl\" ucker formula (\cite[p.269]{GH:1994}) in the second equality.
\end{proof}
\begin{remark}
This proof depends on the fact that we know the curve associated to the solution $(u+c_i)_{1\leq i\leq n}$ a priori. However, it was a little bit difficult for us to reach it. We shall narrate this process in Section 4.
\end{remark}

Beside $r_n\circ v:\Omega\to {\Bbb P}^n$ in Theorem \ref{thm:rn}, we further come up with more totally unramified curves
with form $\mathcal{A}\circ r_n\circ v:\Omega\to {\Bbb P}^n$, where $\mathcal{A}\in{\rm Aut}({\Bbb P}^n)={\rm PSL}(n+1,\,{\Bbb C})$. These curves may produce more solutions of (\ref{1.1}) by the process in Item (1) of Theorem \ref{thm1.1}. Recall that curves  $\mathcal{U}\circ r_n\circ v:\Omega\to {\Bbb P}^n$ give the same
solution as $r_n\circ v:\Omega\to {\Bbb P}^n$ where $\mathcal{U}$'s are unitary transformations on ${\Bbb P}^n$.
Recalling the Gram-Schmidt orthogonalization process, we could
only consider upper triangular linear transformations whose eigenvalues are all positive. Such transformations form a real Lie group of dimension $n^2+2n$, denoted by
$R_n\cong {\rm PSL}(n+1,\,{\Bbb C})/{\rm PSU}(n+1)$. Finally, we arrive at the definition of {\it reduced solution}
of ${\rm SU}(n+1)$ Toda system from the Liouville equation $\Delta u+8e^u=0$ on $\Omega$.

\begin{definition}
We call a solution $(u_1,\cdots, u_n)$ of $SU(n+1)$ Toda system (\ref{1.1}) {\it reduced} from the Liouville equation $\Delta u+8e^u=0$ if and only if some associated curve of $(u_1,\cdots, u_n)$ has form
$\mathcal{R}\circ r_n\circ v$, where $v:\Omega\to {\Bbb P}^1$ is a non-degenerate meromorphic function
and $\mathcal{R}$ lies in $R_n$.
\end{definition}

Here is the main result of this manuscript.

\begin{theorem}\label{m.s.}
A solution $(u_1,\cdots, u_n)$ of ${\rm SU}(n+1)$ Toda system $(\ref{1.1})$ is reduced from the Liouville equation if and only if there is a non-degenerate meromorphic function $f$ on $\Omega$ such that
$$u_1=\log\frac{\sum_{i,j,l,m=1}^{n+1}R_{i,j,l,m}\big(f^{i-1}\bar{f}^{j-1}(f^{\prime})^{ l-1}(\bar{f^\prime})^{m-1}-f^{i-1}(\bar{f^\prime})^{j-1}(f^{\prime})^{l-1}\bar{f}^{m-1}\big)}
{\Big(\sum_{i=1}^{n+1}\sum_{j=1}^{n+1}\sum_{k=1}^{min\{i,j\}}R_{k,i}\bar{R}_{k,j}f^{i-1}\bar{f}^{j-1}\Big)^2},$$
where $R_{i,j}\in\mathbb{C}$ for all $1\leq i\leq j\leq n$ satisfy $\Pi_{m=0}^n(m!R_{m,m})=1$, $R_{m,m}>0$ for all $0\leq m\leq n$,  and
$$R_{i,j,l,m}=\left(\sum_{k=1}^{min\{i,j\}}R_{k,i}\bar{R}_{k,j}\right)
\left(\sum_{k=1}^{min\{l,m\}}R_{k,l}\bar{R}_{k,m}\right).$$
\end{theorem}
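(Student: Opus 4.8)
The plan is to unwind the definition of ``reduced'' and turn the abstract condition ``some associated curve of $(u_1,\dots,u_n)$ has the form $\mathcal R\circ r_n\circ v$'' into the explicit formula for $u_1$. First I would fix a non-degenerate meromorphic function $f$ on $\Omega$ and choose a lift $v=[v_0,v_1]:\Omega\to{\Bbb C}^2$ with $f=v_1/v_0$ normalized by the Wronskian condition $v_0v_1'-v_1v_0'\equiv 1$, which is possible on the simply connected $\Omega$ (away from poles one takes $v_0=1/\sqrt{f'}$, $v_1=f/\sqrt{f'}$, suitably analytically continued). By Lemma~\ref{lm:wr1} (used already in the proof of Theorem~\ref{thm:rn}), the canonical lift $\hat g$ of $r_n\circ v$, namely $\hat g=\bigl(\sqrt{1/n!}\,v_0^n,\ \sqrt{1/((n-1)!1!)}\,v_0^{n-1}v_1,\ \dots,\ \sqrt{1/n!}\,v_1^n\bigr)$, has Wronskian $\equiv e_0\wedge\cdots\wedge e_n$.

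Next I would encode the element $\mathcal R\in R_n$ as an upper-triangular matrix with positive diagonal entries. Writing its entries as $R_{k,i}$ for $1\le k\le i\le n+1$ (indices shifted by one relative to the statement), the condition that $\mathcal R$ descend to an element of $R_n\cong{\rm PSL}(n+1,{\Bbb C})/{\rm PSU}(n+1)$ with the Gram–Schmidt normalization is exactly $R_{m,m}>0$ for all $m$; rescaling so that the induced lift keeps Wronskian $\equiv 1$ forces $\det\mathcal R=1$, i.e. $\prod_{m=0}^n (m!\,R_{m,m})=1$ once one absorbs the $\sqrt{k!/ (\cdots)}$ factors of $r_n$ into the diagonal — this bookkeeping is where I expect the parametrization's constraints to fall out. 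The associated curve of $(u_1,\dots,u_n)$ is then $f_0=\mathcal R\circ r_n\circ v$ with canonical lift $\hat f_0=\mathcal R\,\hat g$ (up to the $(n+1)$-th root of unity, which is irrelevant for the norms below).

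Then the computation is the $n$-term analogue of the last display in the proof of Theorem~\ref{thm:rn}: by Theorem~\ref{thm1.1}(1) and the infinitesimal Plücker formula,
\begin{equation}
u_1=\log\!\left(\frac{\|\Lambda_1(\hat f_0)\|^2}{\|\hat f_0\|^4}\right)
=\log\!\left(\frac{\partial^2}{\partial z\,\partial\bar z}\log\|\hat f_0\|^2\right).
\nonumber
\end{equation}
Now $\|\hat f_0\|^2=\sum_{p}\bigl|(\mathcal R\hat g)_p\bigr|^2$, and since the $p$-th component of $\mathcal R\hat g$ is a linear combination $\sum_{k\le p} R_{k,p}\,(\hat g)_k$ with $(\hat g)_k$ proportional to $v_0^{\,n-k+1}v_1^{\,k-1}$, expanding the square and dividing numerator and denominator of $v_0,v_1$ by $v_0$ converts everything into powers of $f=v_1/v_0$: one gets $\|\hat f_0\|^2=|v_0|^2\cdot\sum_{i,j=1}^{n+1}\sum_{k\le\min\{i,j\}}R_{k,i}\bar R_{k,j}\,f^{i-1}\bar f^{j-1}$ after absorbing the combinatorial factors $\sqrt{1/((n-k+1)!(k-1)!)}$ into the $R_{k,i}$. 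The $|v_0|^2$ prefactor is $\log$-harmonic on the pole-free part (and contributes nothing to $\partial_z\partial_{\bar z}\log$), so writing $D:=\sum_{i,j,k}R_{k,i}\bar R_{k,j}f^{i-1}\bar f^{j-1}$ we have $u_1=\log\bigl(\partial_z\partial_{\bar z}\log D\bigr)=\log\bigl((D\,D_{z\bar z}-D_z D_{\bar z})/D^2\bigr)$. Substituting $D_z=f'\sum (i-1)(\cdots)f^{i-2}\bar f^{j-1}$, $D_{\bar z}=\overline{f'}\sum(j-1)(\cdots)f^{i-1}\bar f^{j-2}$, and $D_{z\bar z}=|f'|^2\sum(i-1)(j-1)(\cdots)$, a short algebraic manipulation — collecting terms, relabelling summation indices, and using the quartic coefficient $R_{i,j,l,m}=\bigl(\sum_{k}R_{k,i}\bar R_{k,j}\bigr)\bigl(\sum_{k}R_{k,l}\bar R_{k,m}\bigr)$ to package the product of two such sums — gives exactly the stated numerator $\sum R_{i,j,l,m}\bigl(f^{i-1}\bar f^{j-1}(f')^{l-1}(\overline{f'})^{m-1}-f^{i-1}(\overline{f'})^{j-1}(f')^{l-1}\bar f^{m-1}\bigr)$ over $D^2$. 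For the converse, given a reduced solution one runs Theorem~\ref{thm1.1}(2) to obtain an associated curve, invokes the definition of reduced to write it as $\mathcal R\circ r_n\circ v$, replaces $\mathcal R$ by its Gram–Schmidt-normalized upper-triangular representative in $R_n$ (which changes the curve only by a unitary transformation, hence not the solution), and reads off $f=v_1/v_0$.

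The main obstacle is purely organizational rather than conceptual: carefully matching the two index conventions (the Wronskian normalization of $\hat f_0$ versus the free parameters $R_{i,j}$), verifying that the binomial factors of the rational normal map $r_n$ can be absorbed into the $R_{k,i}$ consistently with both $R_{m,m}>0$ and $\prod_m(m!R_{m,m})=1$, and checking that the antisymmetrized quartic sum in the numerator is really $D\,D_{z\bar z}-D_z D_{\bar z}$ and not off by a sign or a factor depending on whether one differentiates $f$ or $\bar f$ first — a computation entirely parallel to, but bulkier than, the four-line display already given for Theorem~\ref{thm:rn}.
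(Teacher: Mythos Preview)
Your proposal is correct and follows essentially the same route as the paper: both directions reduce to writing the canonical lift of $\mathcal R\circ r_n\circ v$ as $(f')^{-n/2}$ times a vector whose entries are polynomials in $f$ with coefficients $R_{k,i}$, applying the infinitesimal Pl\"ucker identity $u_1=\log\bigl(\partial_z\partial_{\bar z}\log\|\hat f_0\|^2\bigr)$, and expanding $(D\,D_{z\bar z}-D_zD_{\bar z})/D^2$. One small slip: the prefactor you pull out of $\|\hat f_0\|^2$ should be $|v_0|^{2n}=|f'|^{-n}$, not $|v_0|^2$, though this does not affect the argument since it is log-harmonic either way.
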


As an application, we find a new class of solvable ${\rm SU}(n+1)$ Toda systems with finitely many singular sources on compact Riemann surfaces in the following:

\begin{theorem}
{\it
Suppose that there exists a cone spherical metric $g$ representing the real divisor $D=\sum_{j=1}\,\gamma_j[P_j]$, $\gamma_j>-1$ for all $1\leq j\leq n$, on a compact Riemann surface $X$ {\rm (\cite[Section 2]{LSX:2021})}. Then for each positive integer $n$, we could solve the ${\rm SU}(n+1)$ Toda system on $X$ with the singular source being the $n$-tuple of real divisors $(D,\cdots, D)$. In other words, the system has form
\[-\frac{\partial^2 u_i}{\partial z\partial \bar z}=\sum_{j=1}^n\,a_{ij}e^{u_j}+\sum_{P_k\in U}\, \pi\gamma_k\delta_{P_k} \quad \text{for all}\quad 1\leq i\leq n\]
at each complex coordinate chart $(U,\,z)$ of $X$. Moreover, the solution space of such Toda systems contains a subspace consisting of
those solutions with associated curves being compositions of rational normal curves and developing maps of
$g$.}
\end{theorem}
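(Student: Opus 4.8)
The plan is to apply Theorem~\ref{thm:rn} chart by chart to the developing map of $g$ and then to check that the local pieces glue into a global solution with the prescribed singular sources. Recall (\cite[Section 2]{LSX:2021}) that the cone spherical metric $g$ representing $D$ carries a developing map $h$: a multivalued meromorphic function on $X\setminus{\rm supp}\,D$ with monodromy a representation $\rho\colon\pi_1\bigl(X\setminus{\rm supp}\,D\bigr)\to{\rm PSU}(2)$, such that on every simply connected coordinate chart $(\Omega,z)$ disjoint from ${\rm supp}\,D$ the function $h$ is single-valued, non-constant and immersive (hence a non-degenerate meromorphic function in the sense used above), and there $g=\dfrac{4|h'|^2}{(1+|h|^2)^2}\,|dz|^2$. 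Putting $u:=\log\bigl(\text{density of }g\bigr)-\log 4$ on $\Omega$, we get $e^{u}|dz|^2=\tfrac14 g$, the function $u$ solves the Liouville equation $\Delta u+8e^{u}=0$ on $\Omega$, and $h$ (after Wronskian normalisation) is a curve associated to $u$.

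Applying Theorem~\ref{thm:rn} on each such $\Omega$ now produces the solution $(u+c_1,\dots,u+c_n)$ of $(\ref{1.1})$ on $\Omega$ with associated curve $r_n\circ h$. Since each $u_i$ in the Toda system transforms by the conformal-factor rule, while $u$ is exactly the conformal factor of the globally defined metric $\tfrac14 g$ on $X\setminus{\rm supp}\,D$ and the $c_i$ are constants, the tuple $(u+c_1,\dots,u+c_n)$ is globally well defined on $X\setminus{\rm supp}\,D$; being a solution of $(\ref{1.1})$ near every point, it is a solution on all of $X\setminus{\rm supp}\,D$. On the curve side this is mirrored by the fact that the weights $\sqrt{1/(k!(n-k)!)}$ in $r_n$ make $r_n$ equivariant for the standard ${\rm SU}(2)$-action on ${\Bbb C}^2$ and the $(n+1)$-dimensional irreducible unitary representation on ${\rm Sym}^n{\Bbb C}^2\cong{\Bbb C}^{n+1}$, so that $r_n\circ h$ is a globally defined multivalued curve whose monodromy $\iota\circ\rho$ lands in ${\rm PU}(n+1)$ --- harmless because the recipe $(\ref{1.4})$ only sees a curve up to unitary transformations of ${\Bbb P}^n$.

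It remains to analyse the conical ends. As $g$ represents $D$, near $P_k$ one has $\log\bigl(\text{density of }g\bigr)=2\gamma_k\log|z-P_k|+O(1)$, hence $u+c_i=2\gamma_k\log|z-P_k|+O(1)$, which is locally integrable since $\gamma_k>-1$. Evaluating $-\partial^2/\partial z\partial\bar z$ in the distributional sense then reproduces the smooth Toda equations away from $P_k$ together with a Dirac mass at $P_k$ of weight a fixed multiple of $\gamma_k$, which one matches with the prescribed source $(D,\dots,D)$ (the exact constant, and its sign, being fixed by the conventions in the displayed equation). Thus $(u_1,\dots,u_n):=(u+c_1,\dots,u+c_n)$ solves the singular system on all of $X$. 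For the ``moreover'' part I would rerun the construction with $h$ replaced by $\mathcal R\circ h$ for those $\mathcal R\in R_n$ compatible with the monodromy, namely with $\mathcal R^{\ast}\mathcal R$ centralising the image of $\iota\circ\rho$ (so that $\mathcal R\,(\iota\circ\rho)\,\mathcal R^{-1}$ is still unitary and the construction descends to $X\setminus{\rm supp}\,D$); Theorem~\ref{m.s.} then supplies the explicit formula for $u_1$ attached to each such $\mathcal R$, and distinct admissible $\mathcal R$, taken modulo the residual unitary ambiguity, yield distinct solutions, together forming the asserted subspace of the full solution space, every member of which has an associated curve of the form $\mathcal R\circ r_n\circ h$.

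The step requiring the most care is the local-to-global passage together with the behaviour at ${\rm supp}\,D$: one must confirm that a developing map, which is only a genuine meromorphic function on each simply connected chart, still produces one and the same global solution (this uses both the coordinate-change compatibility of the Toda system and the unitary invariance of $(\ref{1.4})$), and one must translate the conical structure of $g$ at each $P_k$ into the correct distributional source. The latter is essentially encoded in the meaning of ``$g$ represents $D$'', but it is worth noting that when a cone angle $2\pi(\gamma_k+1)$ is a positive integer multiple of $2\pi$ the developing map itself behaves less simply near $P_k$ (the reducible situation), even though the conformal density of $g$ always has the order $2\gamma_k$ that makes the computation go through. A last point is to pin down which real-algebraic subset of $R_n$ survives the monodromy constraint; this depends genuinely on $\rho$ and collapses to a single point exactly when $\iota\circ\rho$ acts irreducibly on ${\Bbb C}^{n+1}$.
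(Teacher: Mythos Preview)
The paper does not actually prove this theorem: immediately after stating it the authors write ``We shall establish a complex analytical framework for solutions of ${\rm SU}(n+1)$ Toda systems with singular sources on Riemann surfaces and prove this theorem in another manuscript.'' So there is no in-paper proof to compare against.

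That said, your sketch is precisely the argument the paper is pointing toward, and it is essentially correct. The reduction to Theorem~\ref{thm:rn} chart by chart, the observation that the conformal factor $u$ of $\tfrac14 g$ is globally defined even though the developing map $h$ is only multivalued, and the remark that $r_n$ intertwines the ${\rm SU}(2)$-monodromy of $h$ with a ${\rm PU}(n+1)$-monodromy (so that the associated curve is well defined modulo unitaries) are the right ingredients for the global existence part. Your distributional computation at the cone points is also the right one; the constant works out because $\partial_z\partial_{\bar z}\log|z|^2=\pi\delta_0$, and you are right to flag that matching the sign with the displayed equation requires fixing the convention for ``$g$ represents $D$''.

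Two small points. First, in the ``moreover'' paragraph you wrote ``$h$ replaced by $\mathcal{R}\circ h$'' where you mean the associated curve $r_n\circ h$ replaced by $\mathcal{R}\circ r_n\circ h$; you correct yourself a few lines later, but it is worth cleaning up. Second, the condition you isolate for an $\mathcal{R}\in R_n$ to descend---that $\mathcal{R}^\ast\mathcal{R}$ centralise the image of $\iota\circ\rho$---is the right one, but note that the theorem as stated only asserts the existence of the subspace coming from compositions $r_n\circ h$ (and their unitary translates), not the full $R_n$-family; your stronger statement is true but goes beyond what is claimed. None of this is a genuine gap: the strategy is sound and only the bookkeeping at the cone points and the monodromy check need to be written out carefully.
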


We shall establish a complex analytical framework for solutions of ${\rm SU}(n+1)$ Toda systems with singular sources on Riemann surfaces and prove this theorem in another manuscript.

\section{The key lemma of Theorem \ref{thm:rn}}

We prove the key lemma as follows for Theorem \ref{thm:rn} in this section.

\begin{lemma}\label{lm:wr1}
Let $v_0,v_1:\Omega\to\mathbb{C}$ be holomorphic functions such that $$v_0(z)v'_1(z)-v_1(z)v'_0(z)\equiv 1$$ on $\Omega$. Then the canonical lifting
$f=\left(\sqrt{\frac{1}{n!}}\,v_0^n,\sqrt{\frac{1}{(n-1)!1!}}\, v_0^{n-1}v_1,
\cdots,\sqrt{\frac{1}{n!}}\, v_1^n\right):\Omega\to {\Bbb C}^{n+1}$
has Wronskian $\equiv 1$.
\end{lemma}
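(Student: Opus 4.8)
The plan is to compute the Wronskian of the $(n+1)$-tuple $\hat f = \bigl(a_0 v_0^n,\, a_1 v_0^{n-1}v_1,\,\dots,\, a_n v_1^n\bigr)$ with $a_k = \sqrt{\binom{n}{k}/n!}$ directly, and to reduce it to the hypothesis $v_0 v_1' - v_1 v_0' \equiv 1$. First I would observe that the map sending a pair $(v_0,v_1)$ to its $n$-th symmetric power $\bigl(v_0^{n-k}v_1^{k}\bigr)_{0\le k\le n}$ intertwines the $\mathrm{SL}(2,\mathbb C)$-action on $(v_0,v_1)$ with the irreducible degree-$n$ representation on $\mathbb C^{n+1}$; since the Wronskian $W(\hat f) = \hat f \wedge \hat f' \wedge \cdots \wedge \hat f^{(n)}$ is, up to the constant $\prod_k a_k$, an $\mathrm{SL}(2,\mathbb C)$-invariant polynomial in $v_0,v_1$ and their derivatives of the right weight, it must be a constant multiple of a power of the basic invariant $v_0 v_1' - v_1 v_0'$. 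A degree/weight count (the Wronskian is a polynomial of homogeneous degree $n+1$ in the curve components, hence degree $n(n+1)$ in $(v_0,v_1)$, and picks up $\binom{n+1}{2} = \tfrac{n(n+1)}{2}$ derivatives in total) forces $W(\hat f) = \mathrm{const}\cdot (v_0 v_1' - v_1 v_0')^{n(n+1)/2}$, so under the hypothesis $W(\hat f)$ is a constant, and it remains only to pin down that the constant is $1$.

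Alternatively, and perhaps more transparently, I would argue by a normalization-plus-analyticity trick: the desired identity $\hat f \wedge \hat f' \wedge \cdots \wedge \hat f^{(n)} \equiv e_0 \wedge \cdots \wedge e_n$ is invariant under the $\mathrm{SL}(2,\mathbb C)$-action described above, and \emph{locally} any solution $(v_0, v_1)$ of $v_0 v_1' - v_1 v_0' \equiv 1$ can be brought, by a constant matrix in $\mathrm{SL}(2,\mathbb C)$ acting on $(v_0,v_1)$ together with a change of the local independent variable, to the model pair $(1, z)$ — indeed $v_0 v_1' - v_1 v_0' \equiv 1$ says exactly that $[v_0 : v_1]$ is an immersion with unit "speed" in the affine sense, so near any point $w := v_1/v_0$ (or $v_0/v_1$) is a biholomorphic local coordinate and one can conjugate to $(v_0,v_1)=(1,w)$. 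For the model pair $(1,z)$ the tuple is $\hat f = \bigl(a_n\cdot 1,\, a_{n-1} z,\, \dots,\, a_0 z^n\bigr)$ (reindexing), whose Wronskian is the classical Wronskian of the monomial basis $1, z, \dots, z^n$ times $\prod_k a_k$, namely $\prod_k a_k \cdot \prod_{k=1}^{n} k! = \prod_{k=0}^n \sqrt{\binom{n}{k}/n!}\cdot \prod_{k=1}^n k!$; I would then check this product equals $1$ by a short elementary manipulation of factorials. Since the Wronskian is a constant and agrees with $e_0\wedge\cdots\wedge e_n$ at one point (hence everywhere), and the statement is local and the two sides are both holomorphic, the identity holds on all of $\Omega$.

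The main obstacle is the bookkeeping in verifying that the constant is exactly $1$ — that is, controlling the three competing products of factorials (the binomial normalizations $a_k$, the Vandermonde/monomial Wronskian factor $\prod_{k=1}^n k!$, and any combinatorial factor arising when one differentiates a product like $v_0^{n-k}v_1^k$ rather than a monomial). I expect the cleanest route is to first prove the identity for the model pair $(1,z)$ by a bare-hands column-operation computation of the $(n+1)\times(n+1)$ Wronskian determinant (the matrix is lower/upper triangular after reordering, so its determinant is a product of diagonal entries), extract the constant, and then invoke the $\mathrm{SL}(2,\mathbb C)$-equivariance and the local-normal-form reduction to transfer the result to an arbitrary $(v_0,v_1)$. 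A subtlety to be careful about: the change of independent variable in the normal-form step multiplies the Wronskian by a power of the derivative of the coordinate change, so one must check that the power matches the total number of derivatives, $\binom{n+1}{2}$, and that this factor is precisely cancelled by the $v_0 v_1'-v_1 v_0'\equiv 1$ condition; writing $w = v_1/v_0$ one has $w' = 1/v_0^2$, and tracking the homogeneity in $v_0$ will close this gap.
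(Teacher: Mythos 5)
Your second (``more transparent'') route is essentially the paper's own proof: the paper factors the common scalar $v_0^n=(v')^{-n/2}$ out of the Wronskian (its Lemma \ref{lm:wr2}, giving the factor $v_0^{n(n+1)}=(v')^{-n(n+1)/2}$) and then shows by induction that the normalized monomial tuple in $v=v_1/v_0$ has Wronskian $(v')^{n(n+1)/2}$ (its Lemma \ref{lm:wr3}), which is exactly your reduction to the model pair $(1,w)$ with $w'=1/v_0^2$ together with the factorial bookkeeping $\prod_k\sqrt{\tbinom{n}{k}/n!}\cdot\prod_{k=1}^{n}k!=1$, and both computations go through. Your first, invariant-theoretic sketch is a genuinely different classical alternative, but as stated it silently invokes the first fundamental theorem for $\mathrm{SL}_2$ to conclude that the only bracket monomial of degree $n(n+1)$ with $n(n+1)/2$ total derivatives is $(v_0v_1'-v_1v_0')^{n(n+1)/2}$, so it would need that justification to stand on its own.
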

We need to prepare two other lemmas for the proof of this one.

\begin{lemma}\label{lm:wr2}
Let $f=(f_0,\cdots,f_n):\Omega\to\mathbb{C}^{n+1}$ and $v:\Omega\to\mathbb{C}$ be holomorphic. Denote  
$v\cdot f:=(vf_0,\cdots,vf_n)$. Then $\Lambda_n(v\cdot f)=v^{n+1}\Lambda_n(f)$.
\end{lemma}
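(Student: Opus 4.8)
The plan is to read $\Lambda_n(g)=g\wedge g'\wedge\cdots\wedge g^{(n)}$, under the fixed identification $e_0\wedge\cdots\wedge e_n=1$, as the determinant of the $(n+1)\times(n+1)$ matrix $W(g)$ whose rows are $g^{(0)},g^{(1)},\dots,g^{(n)}$, and then to see how $W$ behaves under multiplication by a holomorphic scalar. First I would apply the general Leibniz rule
\[
(vf)^{(k)}=\sum_{i=0}^{k}\binom{k}{i}\,v^{(k-i)}\,f^{(i)},\qquad 0\le k\le n,
\]
which says exactly that each row $(vf)^{(k)}$ of $W(v\cdot f)$ is a linear combination of the rows $f^{(0)},\dots,f^{(k)}$ of $W(f)$, with the coefficient of $f^{(k)}$ equal to $v$. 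In matrix form this is the factorization $W(v\cdot f)=T\,W(f)$, where $T=(T_{ki})_{0\le k,i\le n}$ is lower triangular with $T_{ki}=\binom{k}{i}v^{(k-i)}$ for $i\le k$ and $T_{ki}=0$ for $i>k$; in particular every diagonal entry of $T$ equals $v$.

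Taking determinants of $W(v\cdot f)=T\,W(f)$ then yields $\Lambda_n(v\cdot f)=\det(T)\cdot\Lambda_n(f)=v^{\,n+1}\,\Lambda_n(f)$, since the determinant of a triangular matrix is the product of its $n+1$ diagonal entries. Equivalently, one can expand $\Lambda_n(v\cdot f)$ directly by multilinearity of the wedge: a generic term selects, from the $k$-th factor $(vf)^{(k)}$, one summand $\binom{k}{i_k}v^{(k-i_k)}f^{(i_k)}$, and the resulting wedge $f^{(i_0)}\wedge\cdots\wedge f^{(i_n)}$ vanishes unless $k\mapsto i_k$ is a permutation of $\{0,\dots,n\}$; because $i_k\le k$ for every $k$, such a permutation must be the identity, so only the term with $i_k=k$ for all $k$ survives, contributing $v^{\,n+1}\Lambda_n(f)$.

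I do not expect a genuine obstacle in this lemma: the only point worth isolating is the elementary observation that a permutation $\sigma$ of $\{0,\dots,n\}$ with $\sigma(k)\le k$ for all $k$ is forced to be the identity (equivalently, that the transition matrix $T$ above is triangular with constant diagonal $v$), after which the statement is just the Leibniz rule together with the multiplicativity of the determinant. This lemma will then feed into Lemma \ref{lm:wr1} by reducing the Wronskian of the canonical lifting to a computation involving the normalization $v_0v_1'-v_1v_0'\equiv1$.
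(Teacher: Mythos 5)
Your proof is correct and follows essentially the same route as the paper: expand $(vf)^{(k)}$ by the Leibniz rule and observe that, by antisymmetry of the wedge, only the top term $v\,f^{(k)}$ from each factor survives, yielding the factor $v^{n+1}$. Your version is in fact slightly more careful than the paper's (which omits the binomial coefficients $\binom{k}{i}$ and justifies the cancellation more tersely), and the triangular transition-matrix formulation is just a repackaging of the same observation.
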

\begin{proof}
Since $(v\cdot f)^{(k)}=\sum_{i=0}^{k}v^{(i)}\cdot f^{(k-i)}$, we have
\begin{equation}
\begin{aligned}
\Lambda_n(v\cdot f)&=(v\cdot f) \wedge (v\cdot f)^{(1)} \wedge \cdots \wedge (v\cdot f)^{(n)}\\
				   &=(v\cdot f) \wedge (v^{(1)}\cdot f+v\cdot f^{(1)}) \wedge \cdots \wedge (\sum_{i=0}^{n}v^{(i)}\cdot f^{(k-i)})\\
				   &=(v\cdot f)\wedge (v\cdot f^{(1)})\wedge\cdots\wedge (v\cdot f^{(n)})(\text{because } v^{(i)}\cdot f^{(k)}\wedge v^{(j)}\cdot f^{(k)}=0,\forall i,j,k)\\
				   &=v^{n+1}f\wedge f^{(1)}\wedge\cdots\wedge f^{(n)}\\
				   &=v^{n+1}\Lambda_n(f)
\end{aligned}
\nonumber
\end{equation}
\end{proof}
\begin{lemma}\label{lm:wr3}
Let $v:\Omega\to\mathbb{C}$ be non-degenerate meromorphic function and $f=\big( 1,\frac{1}{1!}v,\cdots,\frac{1}{n!}v^n\big):\Omega\to\mathbb{C}^{n+1}$. Then $\Lambda_n(f)=(v^\prime)^{\frac{n(n+1)}{2}}$.
\end{lemma}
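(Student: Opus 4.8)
The plan is to reduce the computation of the Wronskian $\Lambda_n(f)$ for $f=\bigl(1,\tfrac{1}{1!}v,\dots,\tfrac{1}{n!}v^n\bigr)$ to a purely algebraic identity about the generalized Wronskian of the monomials $1,v,v^2,\dots,v^n$ in the single function $v$. First I would observe that $\Lambda_n(f)=f\wedge f'\wedge\cdots\wedge f^{(n)}$ is, up to the fixed scalars $\tfrac1{0!},\tfrac1{1!},\dots,\tfrac1{n!}$ coming from the columns, exactly the determinant $W(1,v,v^2,\dots,v^n)$ of the $(n+1)\times(n+1)$ matrix whose $(i,k)$ entry is $\frac{d^{k}}{dz^{k}}\bigl(v^{i}\bigr)$ for $0\le i,k\le n$. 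So the task is to show
$$
\frac{1}{0!\,1!\cdots n!}\,W\bigl(1,v,\dots,v^n\bigr)=(v')^{\frac{n(n+1)}{2}}.
$$

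The key step is the classical Wronskian-under-substitution formula: if $w=v(z)$ and $g_i(w)=w^i$, then $W_z\bigl(g_0(v),\dots,g_n(v)\bigr)=(v')^{\binom{n+1}{2}}\,W_w\bigl(g_0,\dots,g_n\bigr)$. I would prove this by induction on $n$, differentiating repeatedly and using the chain rule, or equivalently by factoring the derivative operator $\tfrac{d}{dz}=v'\tfrac{d}{dw}$ and keeping track of the total power of $v'$ that is pulled out — at the $k$-th column one extracts $k$ factors of $v'$ (the contributions with $v''$ and higher derivatives drop out because they land in columns already spanned, exactly the alternating-wedge cancellation used in Lemma \ref{lm:wr2}), for a total of $0+1+\cdots+n=\tfrac{n(n+1)}{2}$. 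Then $W_w(1,w,w^2,\dots,w^n)$ is the Wronskian of monomials, a constant equal to $\prod_{k=0}^n k!=0!\,1!\cdots n!$, which one computes either directly by row-reducing the matrix with $(i,k)$-entry $\frac{d^k}{dw^k}w^i = \frac{i!}{(i-k)!}w^{i-k}$ and evaluating the leading term, or by noting it is the determinant of a lower-triangular-after-reordering matrix with diagonal entries $k!$.

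Combining the two, $W_z(1,v,\dots,v^n)=(v')^{n(n+1)/2}\prod_{k=0}^n k!$, and dividing out the column normalizations $\tfrac1{k!}$ gives precisely $\Lambda_n(f)=(v')^{n(n+1)/2}$. I expect the main obstacle to be the bookkeeping in the Wronskian-substitution identity — making the induction airtight, i.e. verifying cleanly that no stray powers of $v'$ or terms involving $v'',v''',\dots$ survive — rather than anything conceptually deep; everything else is a short determinant evaluation. One should also note that non-degeneracy of $v$ (equivalently $v'\not\equiv 0$) is exactly what guarantees $\Lambda_n(f)\not\equiv 0$, so that $f$ is a genuine curve into ${\Bbb P}^n$.
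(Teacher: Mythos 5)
Your argument is correct, but it takes a different route from the paper's. The paper proves the lemma by induction on $n$: since the first component of $f$ is the constant $1$, all of $f',\dots,f^{(n)}$ live in the span of $e_1,\dots,e_n$, so $\Lambda_n(f)$ collapses to $\Lambda_{n-1}$ of the derivative of the remaining tuple $\bigl(\tfrac{1}{1!}v',\dots,\tfrac{1}{n!}nv^{n-1}v'\bigr)=v'\cdot\bigl(1,\tfrac{1}{1!}v,\dots,\tfrac{1}{(n-1)!}v^{n-1}\bigr)$; one then factors out $v'$ by Lemma \ref{lm:wr2} and invokes the inductive hypothesis, the factorials being absorbed into the normalization at each step. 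You instead strip off the normalizing constants at the outset and prove the identity $W_z(1,v,\dots,v^n)=(v')^{n(n+1)/2}\,W_w(1,w,\dots,w^n)$ via the classical Wronskian change-of-variable formula, then evaluate the monomial Wronskian as $\prod_{k=0}^n k!$ from the triangular matrix $\bigl(\tfrac{i!}{(i-k)!}w^{i-k}\bigr)$. Both are sound; the paper's induction is more self-contained (it reuses only Lemma \ref{lm:wr2}, already proved), whereas your route isolates a more general and reusable statement — the substitution formula for Wronskians — at the cost of having to carry out the Faà di Bruno bookkeeping you flag, namely that $\frac{d^k}{dz^k}g(v)=g^{(k)}(v)(v')^k+(\text{combinations of }g^{(j)}(v),\,j<k)$ so that the lower-order contributions are killed by column operations. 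That step is exactly the wedge-cancellation mechanism of Lemma \ref{lm:wr2} in disguise, so the two proofs are close cousins, but your decomposition of the determinant is genuinely different from the paper's degree-lowering induction. (As a minor aside, the exponent $n+1$ appearing in the paper's displayed inductive step should be $n$, since Lemma \ref{lm:wr2} is being applied to an $n$-component tuple; your count $0+1+\cdots+n$ avoids this slip.)
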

\begin{proof}
We prove it by induction.
\begin{enumerate}
\item Case $n=1$ is easy.\\

\item Suppose that $n\geq 2$ and for all $1\leq k\leq n-1$, we have
$$\Lambda_k\left( 1,\frac{1}{1!}v,\cdots,\frac{1}{k!}v^k\right)=(v^\prime)^{\frac{k(k+1)}{2}}.$$
Then
\begin{equation}
\begin{aligned}
\Lambda_n(f)&=\Lambda_n\left( 1,\frac{1}{1!}v,\cdots,\frac{1}{n!}v^n\right)\\
			&=\Lambda_{n-1}\left(\frac{1}{1!}v^\prime,\cdots,\frac{1}{n!}nv^{n-1}v^\prime\right)\\
			&=(v^\prime)^{n+1}\Lambda_{n-1}\left(\frac{1}{1!},\cdots,\frac{1}{(n-1)!}v^{n-1}\right)(\text{by Lemma }\ref{lm:wr2})\\
			&=(v^\prime)^{n+1}(v^\prime)^{\frac{n(n-1)}{2}}\\
			&=(v^\prime)^{\frac{n(n+1)}{2}}
\end{aligned}
\nonumber
\end{equation}
\end{enumerate}
\end{proof}
\begin{proof}[Proof of Lemma \ref{lm:wr1}]
Let $v=v_1/v_0$. Then we have $v^\prime=\frac{1}{v_0^2}$ and
\begin{equation}
\begin{aligned}
\Lambda_n(f)&=\Lambda_n\left(\sqrt{\frac{1}{n!}}\,v_0^n:\sqrt{\frac{1}{(n-1)!1!}}\, v_0^{n-1}v_1:
\cdots:\sqrt{\frac{1}{n!}}\, v_1^n\right)\\
			&=\Lambda_n\left(\sqrt{\frac{1}{n!}}\,(v^\prime)^{-\frac{n}{2}}:\sqrt{\frac{1}{(n-1)!1!}}\, (v^\prime)^{-\frac{n}{2}}v:
\cdots:\sqrt{\frac{1}{n!}}\, (v^\prime)^{-\frac{n}{2}}v^n\right)\\
			&=(v^\prime)^{-\frac{n(n+1)}{2}}\Lambda_n\left(\sqrt{\frac{1}{n!}}\,:\sqrt{\frac{1}{(n-1)!1!}}\, v:
\cdots:\sqrt{\frac{1}{n!}}\, v^n\right)(\text{by Lemma }\ref{lm:wr2})\\
			&=(v^\prime)^{-\frac{n(n+1)}{2}}(v^\prime)^{\frac{n(n+1)}{2}}(\text{by Lemma }\ref{lm:wr3})\\
			&=1.
\end{aligned}
\nonumber
\end{equation}
\end{proof}

\section{why rational normal map}
Let $u:\Omega\to {\Bbb R}$ be a solution of the Liouville equation
\begin{equation}
\label{2.1}
\triangle u+8e^u=0\quad {\rm on}\quad \Omega
\end{equation}
and $v=[v_0,\, v_1]:\Omega\to {\Bbb P}^1$ a curve associated to $u$. Since the Wronskian $v_0v'_1-v'_0v_1$ of
the lifting $(v_0,\,v_1)$ of $v$ equals $1$ on $\Omega$, we have $$u=-\ln\,\| v\|^4=-\ln\,\Big(\| v_0\|^2+\| v_1\|^2\Big)^2.$$ Theorem \ref{thm:rn} says that
$$(u_i)_{1\leq i\leq n}:=(u+c_i)_{1\leq i\leq n}=\Big(u+\ln\, i +\ln\,(n+1-i) \Big)_{1\leq i\leq n}$$ solves (\ref{1.1}) and has $r_n\circ v:\Omega\to {\Bbb P}^n$ as its associated curve. However, it seems that the difficulty hides in
how we came up with the rational normal map $r_n$.
In this section, we discuss our mental journey of overcoming this difficulty.

By using the substitution (\ref{1.4}), we obtain
\begin{eqnarray*}
w_k&=&\frac{n-2k}{2}\log\| v\|^2+\frac{1}{2}\sum_{i=1}^kc_i-\frac{1}{2}C,\,\,
{\rm where}\,\, C=\frac{1}{n+1}\sum_{i=1}^n(n-i+1)c_i,\,\, {\rm and} \\
\frac{\partial w_k}{\partial z}&=&\frac{n-2k}{2}\frac{(v_z,v)}{\| v\|^2}.
\end{eqnarray*}
To understand how an associated curve to $(u_i)_{1\leq i\leq n}$ is relevant to curve $v$, we need to use the following result in  Jost-Wang \cite{JW:2002}.

\begin{theorem}{\rm{\cite[section 3]{JW:2002}}}\label{thm2.1}
Let $\phi$ be a Toda map of $(u_i)_{1\leq i\leq n}$ and $(\hat{f}_0,\cdots,\hat{f}_n):=\phi\cdot \Big({\rm diag}(e^{w_i})_{0\leq i\leq n}\Big)$. Then $\hat{f}_0$ is the lift of a holomorphic curve $f_0$ associated to $(u_i)_{1\leq i\leq n}$  and $\hat{f}_i$ satisfies the following equations:
\begin{equation}\label{2.2}
	\begin{cases}
			\begin{aligned}
				\hat{f}_{k+1} &= \frac{\partial \hat{f}_k}{\partial z}-2(w_k)_z\hat{f_k}\\
							  &= \frac{\partial \hat{f}_k}{\partial z}+(2k-n)\frac{(v_z,v)}{\| v\|^2}\hat{f}_k \,\,{\rm for\ all}\,\, 0\leq k< n;\\
				\hat{f}_{n+1} &= 0
			\end{aligned}
	\end{cases}
		\end{equation}
	and	
		\begin{equation}\label{2.3}
			\begin{cases}
			\begin{aligned}
				\frac{\partial \hat{f}_k}{\partial \bar{z}} &= -e^{u_k}\hat{f}_{k-1}=-\frac{e^{c_k}}{\| v\|^4}\hat{f}_{k-1}
 \,\,{\rm for\ all}\,\, 0< k\leq n.\\
											   \frac{\hat{f}_{0}}{\partial\bar{z}} &= 0
			\end{aligned}
		\end{cases}
		\end{equation}\par
\end{theorem}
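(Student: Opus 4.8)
The plan is to verify everything by directly differentiating the matrix $F := \phi\cdot D$, where $D := {\rm diag}(e^{w_0},\dots,e^{w_n})$, using only the defining relation $\phi^{-1}{\rm d}\phi = \mathcal{U}\,{\rm d}z + \mathcal{V}\,{\rm d}\bar z$ together with the block decompositions $\mathcal{U} = A + N_-$ and $\mathcal{V} = -B - N_+$. Here $A := {\rm diag}((w_i)_z)$, $B := {\rm diag}((w_i)_{\bar z})$, $N_-$ is the subdiagonal matrix with $(N_-)_{i,i-1} = e^{w_i - w_{i-1}}$, and $N_+$ the superdiagonal matrix with $(N_+)_{i-1,i} = e^{w_i - w_{i-1}}$. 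Since $\phi_z = \phi\mathcal{U}$, $\phi_{\bar z} = \phi\mathcal{V}$, and $D_z = AD$, $D_{\bar z} = BD$, the diagonal parts cancel and one gets the two identities
$$F_z = \phi\,(2AD + N_- D), \qquad F_{\bar z} = -\phi\,N_+ D,$$
which are the engine of the whole proof.

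Next I would read off the columns, using $\hat f_k = Fe_k = e^{w_k}\phi e_k$. A direct check gives $N_- D\,e_k = e^{w_{k+1}}e_{k+1}$ for $k<n$ (and $0$ for $k=n$), so the $k$-th column of the first identity reads $\partial_z\hat f_k = 2(w_k)_z\hat f_k + \hat f_{k+1}$, i.e. \eqref{2.2} with the convention $\hat f_{n+1}=0$. Similarly $N_+ D\,e_k = e^{2w_k - w_{k-1}}e_{k-1}$ for $k>0$ (and $0$ for $k=0$), whence $\partial_{\bar z}\hat f_k = -e^{2(w_k - w_{k-1})}\hat f_{k-1}$ and $\partial_{\bar z}\hat f_0 = 0$. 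The exponent is then rewritten via $w_k - w_{k-1} = \tfrac12 u_k$ (from \eqref{1.3}) to obtain $-e^{u_k}\hat f_{k-1}$, and the explicit forms in \eqref{2.2}--\eqref{2.3} involving $(v_z,v)/\|v\|^2$ and $e^{c_k}/\|v\|^4$ follow by substituting the recorded value of $(w_k)_z$ together with $u_k = u + c_k$ and $e^u = \|v\|^{-4}$.

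It remains to identify $\hat f_0$ as the canonical lift of a curve associated to $(u_i)$. Holomorphicity is exactly the $k=0$ case above. For the Wronskian normalization I would use \eqref{2.2} to write each $\hat f_k = \hat f_0^{(k)} + (\text{scalar combination of } \hat f_0^{(j)},\ j<k)$, so that the wedge products telescope: $\hat f_0\wedge\hat f_1\wedge\cdots\wedge\hat f_k = \Lambda_k(\hat f_0)$ for all $k$. Taking $k=n$ yields $\Lambda_n(\hat f_0) = \det(\phi D)\,e_0\wedge\cdots\wedge e_n$; since $\det\phi=1$ and $\sum_{i=0}^n w_i = 0$ (a one-line consequence of \eqref{1.3}), we get $\det(\phi D)=1$, so the Wronskian is identically $e_0\wedge\cdots\wedge e_n$ and $f_0$ is totally unramified.

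Finally, to confirm that $(u_i)$ is the solution reconstructed from $f_0$ by \eqref{1.4}, I would invoke the unitarity of $\phi$. From $\hat f_0\wedge\cdots\wedge\hat f_k = e^{\,w_0+\cdots+w_k}(\phi e_0)\wedge\cdots\wedge(\phi e_k)$ and the fact that $\phi\in{\rm SU}(n+1)$ preserves the induced Hermitian norm (so $\|(\phi e_0)\wedge\cdots\wedge(\phi e_k)\| = 1$), one obtains $\log\|\Lambda_k(\hat f_0)\|^2 = 2\sum_{l=0}^k w_l$. Feeding this into the right-hand side of \eqref{1.4} and using the tridiagonal Cartan structure collapses the alternating sum to $2(w_i - w_{i-1}) = u_i$. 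I expect the only delicate point to be the bookkeeping at the two ends of the Dynkin chain ($i=1$ and $i=n$), where the absent entries $a_{i,0}$ and $a_{i,n+1}$ must be absorbed using the normalizations $\sum_{l=0}^0 w_l = w_0$ and $\sum_{l=0}^n w_l = 0$; away from these boundary indices the argument is a purely mechanical computation driven by the two identities for $F_z$ and $F_{\bar z}$.
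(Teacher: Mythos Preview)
Your proposal is correct. The paper does not give its own proof of this statement: it is quoted verbatim as a result from Jost--Wang \cite[Section~3]{JW:2002}, with the $v$-specific second lines in \eqref{2.2}--\eqref{2.3} obtained simply by inserting the values of $(w_k)_z$ and $e^{u_k}$ computed in the paragraph preceding the theorem. Your argument---differentiating $F=\phi D$ using $\phi_z=\phi\mathcal U$, $\phi_{\bar z}=\phi\mathcal V$ so that the diagonal parts cancel against $D_z,D_{\bar z}$---is precisely the standard derivation behind the Jost--Wang correspondence, and your handling of the Wronskian normalization (via $\sum_i w_i=0$ from \eqref{1.3}) and of the reconstruction of $u_i$ from $\|\Lambda_{j-1}(\hat f_0)\|^2=e^{2(w_0+\cdots+w_{j-1})}$ is clean; the boundary cases $i=1,n$ go through exactly because $S_{n+1}:=\sum_{l=0}^n w_l=0$ supplies the ``missing'' neighbor, as you anticipated. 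In short, the paper defers to the literature here, while you supply the natural direct proof; there is no discrepancy in approach, only in level of detail.
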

By using equation (\ref{2.2}), we could obtain that $\hat{f}_0$ satisfies the following linear ODE of $(n+1)$th order.

\begin{lemma}\label{lem2.2}
Let $P$ be the differential operator acting on complex-valued functions on $\Omega$ defined by
$\displaystyle{Pf:=\frac{\partial}{\partial z}\biggl(\| v\|^4 f\biggr)}$.
Then we have
\begin{equation}\label{2.4}
P^{n+1}\Big(\| v\|^{-2(n+2)}\hat{f}_0\Big)=0.
\end{equation}
That is, all the $(n+1)$ components of $\hat{f}_0$ satisfy this ODE.
\end{lemma}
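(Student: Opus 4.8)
The plan is to remove the non-holomorphic factor $(v_z,v)/\|v\|^2$ from the first-order chain \eqref{2.2} by a scalar gauge transformation: after multiplying $\hat f_k$ by the right power of $\|v\|^2$, each step of \eqref{2.2} becomes a single application of the operator $P$, and the $(n+1)$th order ODE \eqref{2.4} then drops out by iterating $n+1$ times and invoking the closing relation $\hat f_{n+1}=0$ of Theorem \ref{thm2.1}.

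Concretely, I would first record the elementary identity $\partial\|v\|^2/\partial z=(v_z,v)$, which holds because $v_0,v_1$ are holomorphic; with it, \eqref{2.2} reads
\[
\hat f_{k+1}=\frac{\partial\hat f_k}{\partial z}+(2k-n)\,\frac{(\|v\|^2)_z}{\|v\|^2}\,\hat f_k\qquad(0\le k\le n),
\]
where the $k=n$ instance is the assertion $\hat f_{n+1}=0$ of Theorem \ref{thm2.1}. I then set
\[
F_k:=\|v\|^{\,2(2k-n-2)}\hat f_k\qquad(0\le k\le n+1),
\]
so that $F_0=\|v\|^{-2(n+2)}\hat f_0$ and $F_{n+1}=\|v\|^{2n}\hat f_{n+1}=0$, and I compute, for each $0\le k\le n$,
\begin{align*}
PF_k &=\frac{\partial}{\partial z}\bigl(\|v\|^4F_k\bigr)=\frac{\partial}{\partial z}\bigl(\|v\|^{2(2k-n)}\hat f_k\bigr)\\
&=\|v\|^{2(2k-n)}\Bigl(\frac{\partial\hat f_k}{\partial z}+(2k-n)\,\frac{(\|v\|^2)_z}{\|v\|^2}\,\hat f_k\Bigr)=\|v\|^{2(2k-n)}\hat f_{k+1}=F_{k+1},
\end{align*}
where the second equality uses $\|v\|^4\cdot\|v\|^{2(2k-n-2)}=\|v\|^{2(2k-n)}$, the third is the Leibniz rule, the fourth is the recursion above, and the last uses the identity of exponents $2(2k-n)=2\bigl(2(k+1)-n-2\bigr)$. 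Thus $P$ simply shifts the index by one along the chain $F_0,F_1,\dots,F_{n+1}$.

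Iterating, $P^{n+1}F_0=F_{n+1}=0$, that is, $P^{n+1}\bigl(\|v\|^{-2(n+2)}\hat f_0\bigr)=0$; since $P$ acts componentwise, each of the $n+1$ scalar components of $\hat f_0$ solves the $(n+1)$th order linear ODE $P^{n+1}\bigl(\|v\|^{-2(n+2)}g\bigr)=0$ in $g$, which is exactly \eqref{2.4}. The only step that needs care is the top of the chain: one must know that substituting $k=n$ into the recursion genuinely produces $\hat f_{n+1}=0$ rather than some nonzero vector, and this is precisely the content of the Wronskian normalization $\hat f_0\wedge\hat f_0'\wedge\cdots\wedge\hat f_0^{(n)}\equiv e_0\wedge\cdots\wedge e_n$ built into Theorem \ref{thm2.1}. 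Beyond that, the whole lemma is the bookkeeping observation that the exponents $2(2k-n-2)$ are exactly those for which \eqref{2.2} turns into $PF_k=F_{k+1}$, and I expect no further obstacle.
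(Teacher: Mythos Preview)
Your proof is correct and is essentially the same as the paper's: your gauge $F_k=\|v\|^{2(2k-n-2)}\hat f_k$ coincides with the paper's $\varphi_k=\|v\|^{2(2k-2-n)}\hat f_k$, and both arguments reduce the recursion \eqref{2.2} to $PF_k=F_{k+1}$ and then iterate using $\hat f_{n+1}=0$. Your added remark tracing $\hat f_{n+1}=0$ back to the Wronskian normalization is correct but unnecessary here, since the lemma simply quotes that vanishing from Theorem~\ref{thm2.1}.
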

\begin{proof}
At first we rewrite (\ref{2.2}) into the following equations:
\begin{equation}
			\begin{aligned}
				\hat{f}_{k+1} &= \| v\|^{2(n-2k)}\frac{\partial}{\partial z}\bigg(\| v\|^{2(2k-n)}\hat{f}_k\bigg)\,\,{\rm for\ all}\,\,  0\leq k\leq n;\\
				\hat{f}_{n+1} &= 0
			\end{aligned}
		\end{equation}
Denoting $\varphi_k:=\| v\|^{2(2k-2-n)}\hat{f}_k$,
we obtain
		$$\varphi_{k+1}=\frac{\partial}{\partial z}\bigg(\| v\|^4\varphi_k\bigg)=P\varphi_k\,\,{\rm for\ all}\,\,  0\leq k\leq n.$$
Hence we find $\phi_{n+1}=P\phi_n=\cdots=P^{n+1}\phi_0$ and
$$P^{n+1}\Big(\| v\|^{-2(n+2)}\hat{f}_0\Big)=0.$$
\end{proof}

To solve the equation
\begin{equation}
\label{equ:P}
P^{n+1}\Big(\| v\|^{-2(n+2)}f\Big)=0,
\end{equation}
we need to quote some preliminary results about linear ODEs on a {\it differential field}, whose definition is given in the following:

\begin{definition}{\rm{\cite[Section 2]{Dale:2016}}}
A {\it differential field} $(K,\partial)$ is a field equipped with a map
$\partial: K\to K$, called a {\it derivation} over $K$,
such that for all $x,\,y\in K$ there holds
\begin{equation}
	\begin{aligned}
		&\partial(x+y)=\partial x+\partial y\\
		&\partial(xy)=x\partial y+y\partial x
	\end{aligned}
\nonumber
\end{equation}
Call $C_K:=\{x\in K|\partial x=0\}$ the {\it constant field} of $(K,\, \partial)$.
\end{definition}
\begin{lemma}{\rm{\cite[Remark 3.15]{Dale:2016}}}\label{lem 2.3}
Let $(K,\partial)$ be a differential field and
$$Lx:=\sum_{i=1}^n\, a_i\partial^i x=0$$ be a homogeneous linear  ODE of $n$th order on $K$, where
$a_i\in K$ for all $1\leq i\leq n$ and $a_n\not=0$.
Then the solution space $V:=\{x\in K:Lx=0\}$ has dimension not exceeding $n$ over the constant field $C_K$.
\end{lemma}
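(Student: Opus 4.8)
The plan is to prove, by induction on the order, a mild generalization of the statement, of which Lemma~\ref{lem 2.3} is the special case with vanishing $\partial^0$-coefficient. Precisely, I would show: for every integer $m\ge 0$, if $Mx=\sum_{i=0}^m a_i\partial^i x$ with all $a_i\in K$ and $a_m\ne 0$, then the solution space $V=\{x\in K:Mx=0\}$ has dimension at most $m$ over $C_K$; the lemma is then the case $a_0=0$, $m=n$. One preliminary point is that $C_K$ really is a field, so that ``dimension'' and the rank--nullity theorem are available: first $\partial 1=0$ because $\partial 1=\partial(1\cdot 1)=2\partial 1$, and then for $x\in C_K\setminus\{0\}$ the identity $0=\partial(1)=\partial(x\cdot x^{-1})=x\,\partial(x^{-1})$ (using $\partial x=0$) forces $\partial(x^{-1})=0$ since $K$ has no zero divisors. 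This is in fact the only place where the hypothesis that $K$ is a \emph{field}, rather than merely a differential ring, gets genuinely used.

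The base case $m=0$ is immediate ($a_0x=0$ with $a_0\ne 0$ gives $x=0$). For the inductive step I would assume the claim for order $m-1$; if $V=\{0\}$ there is nothing to prove, so fix $x_0\in V\setminus\{0\}$. Because $K$ is a field, every $x\in K$ is $x=x_0z$ for a unique $z=x/x_0\in K$, and expanding $M(x_0z)$ by the Leibniz rule $\partial^i(x_0z)=\sum_{k=0}^i\binom{i}{k}(\partial^{i-k}x_0)(\partial^k z)$ yields
\[
M(x_0z)=\sum_{k=0}^m b_k\,\partial^k z,\qquad b_0=M(x_0)=0,\qquad b_m=a_mx_0\ne 0 .
\]
Thus $M(x_0z)=\sum_{k=1}^m b_k\,\partial^k z$, and setting $w=\partial z$ turns $Mx=0$ into $\widetilde Mw:=\sum_{k=1}^m b_k\,\partial^{k-1}w=0$, a homogeneous linear ODE of order $m-1$ with leading coefficient $b_m\ne 0$, whose solution space therefore has dimension $\le m-1$ over $C_K$ by the inductive hypothesis.

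To finish, I would consider the map $\Phi:V\to K$ given by $\Phi(x)=\partial(x/x_0)$. It is $C_K$-linear; its image is contained in the solution space of $\widetilde M$, hence has dimension $\le m-1$; and its kernel is $\{x\in V:x/x_0\in C_K\}=C_K\,x_0$, of dimension $1$. Rank--nullity over the field $C_K$ then gives $\dim_{C_K}V\le 1+(m-1)=m$, completing the induction.

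I do not expect a real obstacle, only a bookkeeping subtlety: although the lemma carries no $\partial^0$-term, the substitution $x=x_0z$ followed by $w=\partial z$ typically produces a lower-order equation that \emph{does} have one, so the induction has to run over the wider class of order-$m$ equations with an arbitrary $\partial^0$-coefficient. An alternative would be a Wronskian argument: for $x_1,\dots,x_{m+1}\in V$, solving $Mx_j=0$ for $\partial^m x_j$ exhibits the last column of the Wronskian matrix $\big(\partial^{\,j-1}x_i\big)$ as a $K$-combination of the earlier columns, so that Wronskian vanishes, and over a differential field a vanishing Wronskian forces $C_K$-linear dependence; but this relies on the Wronskian criterion as a separate ingredient, whereas the substitution proof above is self-contained.
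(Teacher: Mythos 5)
Your argument is correct and complete. Note that the paper gives no proof of Lemma \ref{lem 2.3} at all: it is quoted from \cite[Remark 3.15]{Dale:2016}, where it is deduced from the Wronskian criterion (the paper's Lemma \ref{lem:indep}) --- given $n+1$ solutions, the relation $\partial^n x=-a_n^{-1}\sum_{i=1}^{n-1}a_i\partial^i x$ exhibits the last column of their $(n+1)\times(n+1)$ Wronskian matrix as a $K$-linear combination of the columns indexed by $j=1,\dots,n-1$, so the Wronskian vanishes and the solutions are $C_K$-linearly dependent; this is precisely the ``alternative'' you sketch at the end. Your primary route --- reduction of order via $x=x_0z$, $w=\partial z$, followed by rank--nullity over the field $C_K$ applied to $\Phi(x)=\partial(x/x_0)$ --- is genuinely different and self-contained: it does not invoke the Wronskian criterion as a black box, at the cost of widening the induction to operators with a nonzero $\partial^0$-coefficient, which you correctly anticipate (and which is forced, since $b_1$ need not vanish after the substitution). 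The individual steps all check out: $C_K$ is a field, the Leibniz expansion gives $b_0=M(x_0)=0$ and $b_m=a_mx_0\neq 0$ without any characteristic issues, $\ker\Phi=C_Kx_0$, and rank--nullity in the form $V/\ker\Phi\cong\operatorname{im}\Phi$ requires no a priori finite-dimensionality of $V$. Either approach is acceptable; yours has the advantage of making the appeal to the external reference unnecessary, while the Wronskian route is shorter if Lemma \ref{lem:indep} is already available (as it is in this paper).
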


\begin{lemma}{\rm{\cite[Proposition 3.14]{Dale:2016}}}
\label{lem:indep}
The $n$ elements $x_1,\cdots,x_n$ in $K$ are  linearly dependent over the constant field $C_K$ if and only if $$W(x_1,\cdots,x_n):=\det\, \Big(\partial^j x_i\Big)_{1\leq i\leq n,\, 0\leq j\leq n-1}=0.$$
\end{lemma}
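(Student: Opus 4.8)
The plan is to prove both implications, treating the forward direction (linear dependence $\Rightarrow W=0$) as routine and concentrating on the converse, which I would establish by induction on $n$ using the preceding dimension bound, Lemma \ref{lem 2.3}. Throughout I would use two elementary facts about the constant field: that $C_K$ is a subfield of $K$, and that constants commute past the derivation, i.e. $\partial^j(cx)=c\,\partial^j x$ for every $c\in C_K$, $x\in K$ and $j\ge 0$ (immediate from $\partial c=0$ and induction on $j$). For the easy direction, suppose $\sum_{i=1}^n c_i x_i=0$ with $c_i\in C_K$ not all zero. Applying $\partial^j$ and using $\partial c_i=0$ gives $\sum_{i=1}^n c_i\,\partial^j x_i=0$ for every $0\le j\le n-1$. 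Hence the nonzero row vector $(c_1,\dots,c_n)$ annihilates the Wronskian matrix $(\partial^j x_i)$ from the left, so that matrix is singular over $K$ and $W(x_1,\dots,x_n)=0$.

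For the converse I would argue by induction on $n$. The base case $n=1$ is immediate, since $W(x_1)=x_1$, so $W=0$ forces $x_1=0$, a nontrivial dependence. For the inductive step assume $W(x_1,\dots,x_n)=0$. If the $(n-1)$-fold Wronskian $W(x_1,\dots,x_{n-1})=0$, then by the induction hypothesis $x_1,\dots,x_{n-1}$ are already dependent over $C_K$, and the same relation (with zero coefficient on $x_n$) witnesses dependence of $x_1,\dots,x_n$. So I may assume $W(x_1,\dots,x_{n-1})\neq 0$.

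The key device is the \emph{bordered Wronskian}. Define a map $L$ on $K$ by
$$L(X):=W(x_1,\dots,x_{n-1},X)=\det\begin{pmatrix} x_1 & \partial x_1 & \cdots & \partial^{n-1}x_1\\ \vdots & & & \vdots\\ x_{n-1} & \partial x_{n-1} & \cdots & \partial^{n-1}x_{n-1}\\ X & \partial X & \cdots & \partial^{n-1}X\end{pmatrix}.$$
Expanding along the last row exhibits $L$ as a linear differential operator $L(X)=\sum_{j=0}^{n-1} b_j\,\partial^j X$ whose coefficients $b_j\in K$ are cofactors independent of $X$; in particular the leading coefficient $b_{n-1}$ is the $(n,n)$ minor, namely $W(x_1,\dots,x_{n-1})\neq 0$. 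Thus $L$ defines a homogeneous linear ODE of order exactly $n-1$ with nonzero leading coefficient. Substituting $X=x_i$ for $1\le i\le n-1$ makes the last row of the determinant coincide with row $i$, so $L(x_i)=0$; and $L(x_n)=W(x_1,\dots,x_n)=0$ by hypothesis. Hence $x_1,\dots,x_n$ are $n$ solutions of $L(X)=0$. By Lemma \ref{lem 2.3} the solution space of such an order-$(n-1)$ ODE has dimension at most $n-1$ over $C_K$, so the $n$ elements $x_1,\dots,x_n$ must be linearly dependent over $C_K$, completing the induction.

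The main obstacle is the converse direction, and within it the recognition that the bordered determinant $L$ is a genuine order-$(n-1)$ operator with invertible leading coefficient precisely when $W(x_1,\dots,x_{n-1})\neq 0$; this is exactly what lets me feed $n$ solutions into the dimension bound of Lemma \ref{lem 2.3}. The case split on whether $W(x_1,\dots,x_{n-1})$ vanishes is what keeps the induction honest, and the only point requiring care is that constants really pass through $\partial$, so that a dependence relation over $C_K$ is preserved under repeated differentiation.
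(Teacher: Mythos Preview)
Your proof is correct. Note, however, that the paper does not supply its own proof of this lemma: it is quoted verbatim with a citation to \cite[Proposition 3.14]{Dale:2016} and left unproved, so there is no in-paper argument to compare against. Your approach---reducing the converse to the dimension bound of Lemma~\ref{lem 2.3} via the bordered Wronskian $L(X)=W(x_1,\dots,x_{n-1},X)$, which is an order-$(n-1)$ operator with leading coefficient $W(x_1,\dots,x_{n-1})\neq 0$---is exactly the standard proof one finds in the cited source. One cosmetic remark: as literally written in the paper, Lemma~\ref{lem 2.3} sums from $i=1$ and so omits a zeroth-order term $a_0 x$, whereas your $L$ generally has $b_0\neq 0$; this is a typo in the paper's transcription of the lemma, not a defect in your argument.
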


We could think of equation \eqref{equ:P} as a homogeneous linear ODE of $(n+1)$th order in a differetial field in the following:

\begin{example}
Let $R_1$ be the ring generated by real analytic complex-valued functions on $\Omega$ and
$R$ its subring containing all the holomorphic functions and all the anti-holomorphic ones on $\Omega$.
Recall that the zero set of a nonconstant real analytic function on $\Omega$ has measure zero
(\cite{Mit:2020}). Hence both $R_1$ and $R$ are integral rings.

Let $K:=\text{Frac}(R)$ be the fractional field  of $R$. Then $\left(K,\partial_z:=\frac{\partial}{\partial z}\right)$ forms a differential field and $$C_K=\{f\in K|\partial_z\,f=0\}=\text{Frac}\big(\left\{\text{anti-holomorphic function on}\,\Omega\right\}\big).$$
Then the equation (\ref{equ:P}) is a homogeneous linear ODE of $(n+1)$th on $K$. Let $V$ be the linear space of solutions of (\ref{equ:P}) on $K$, and $H$ the linear space consisting of holomorphic functions in $V$.
By Lemma \ref{lem 2.3},  $V$ has dimension not exceeding $(n+1)$ over $C_K$.
\end{example}

\begin{lemma}
$\dim_{\mathbb{C}}H\leq n+1$.
\end{lemma}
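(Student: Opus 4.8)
The plan is to leverage the differential-field machinery already set up: we have $V$, the solution space of \eqref{equ:P} over the differential field $K = \mathrm{Frac}(R)$ with derivation $\partial_z$, and by Lemma \ref{lem 2.3} we know $\dim_{C_K} V \leq n+1$. The subspace $H \subseteq V$ of holomorphic solutions is a vector space over $\mathbb{C}$, since $\mathbb{C} \subseteq C_K$ and $H$ is clearly closed under $\mathbb{C}$-linear combinations (a linear combination of holomorphic functions is holomorphic, and the equation is $\mathbb{C}$-linear). So the goal is to transfer the bound on $\dim_{C_K} V$ to a bound on $\dim_{\mathbb{C}} H$.

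First I would recall that for elements of any differential field, linear independence over the constant field is detected by the Wronskian determinant: by Lemma \ref{lem:indep}, elements $x_1, \dots, x_m \in K$ are linearly independent over $C_K$ if and only if $W(x_1, \dots, x_m) \neq 0$. Now suppose for contradiction that $\dim_{\mathbb{C}} H \geq n+2$; pick $h_1, \dots, h_{n+2} \in H$ that are linearly independent over $\mathbb{C}$. These are holomorphic functions on $\Omega$, and the standard fact is that holomorphic functions linearly independent over $\mathbb{C}$ have nonvanishing Wronskian (as a holomorphic function, hence nonzero in $K$) — indeed, if the classical Wronskian $W(h_1, \dots, h_{n+2})$ vanished identically on $\Omega$, then $h_1, \dots, h_{n+2}$ would be linearly dependent over $\mathbb{C}$ on the connected domain $\Omega$, contradicting our choice. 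Therefore $W(h_1, \dots, h_{n+2}) \neq 0$ in $K$, so by Lemma \ref{lem:indep} the $h_i$ are linearly independent over $C_K$ as elements of $K$. But they all lie in $V$, contradicting $\dim_{C_K} V \leq n+1$. Hence $\dim_{\mathbb{C}} H \leq n+1$.

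The one point that requires care — and what I expect to be the main (though minor) obstacle — is justifying cleanly that for holomorphic functions on the connected domain $\Omega$, vanishing of the Wronskian is equivalent to $\mathbb{C}$-linear dependence. The forward direction (dependence $\Rightarrow$ Wronskian zero) is immediate by differentiating a linear relation; the converse is the classical theorem on the Wronskian over a connected (and here simply connected) domain, which one can cite or prove by an induction on the number of functions, dividing through by one of them and differentiating to reduce the count. Once this is in place, the argument is just the Wronskian criterion of Lemma \ref{lem:indep} applied in two directions — over $\mathbb{C}$ and over $C_K$ — with the same determinant serving both roles, since the derivation $\partial_z$ restricts to ordinary complex differentiation on holomorphic functions.
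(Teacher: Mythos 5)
Your proof is correct and follows essentially the same route as the paper: pick $\mathbb{C}$-linearly independent elements of $H$, observe their Wronskian is a nonzero element of $K$, and invoke Lemma \ref{lem:indep} together with the bound $\dim_{C_K}V\leq n+1$ from Lemma \ref{lem 2.3}. You are in fact slightly more careful than the paper at the key step, since you only claim (and need) that the Wronskian of $\mathbb{C}$-linearly independent holomorphic functions on the connected domain $\Omega$ is not identically zero, whereas the paper asserts it ``nowhere vanishes,'' which is stronger than necessary and not true in general.
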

\begin{proof}
Choose ${\Bbb C}$-linearly independent elements $f_1,\cdots, f_k$ in $V$.
Then
$W(f_1,\cdots,f_k)$
nowhere vanishes on $\Omega$.
By Lemmas \ref{lem:indep} and \ref{lem 2.3}, $f_1,\cdots,f_k$ are linearly independent over $C_k$ and $k\leq n+1$.
\end{proof}
At last, we solve (\ref{equ:P}) in the following:

\begin{lemma}
$\{v_0^n,v_0^{n-1}v_1,\cdots,v_1^n\}$ is a base of $H$ over ${\Bbb C}$.
\end{lemma}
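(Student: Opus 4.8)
The plan is to prove the two containments that give the basis and then invoke a dimension count. By the preceding lemma, $\dim_{\mathbb{C}}H\le n+1$, so it is enough to check that the $n+1$ holomorphic functions $v_0^n,v_0^{n-1}v_1,\dots,v_1^n$ all belong to $H$ and are linearly independent over $\mathbb{C}$. Linear independence is the easy half: from $v_0v_1'-v_1v_0'\equiv 1$ we get that the meromorphic function $v_1/v_0$ has derivative $v_0^{-2}\not\equiv 0$ and is therefore non-constant, so $1,v_1/v_0,(v_1/v_0)^2,\dots,(v_1/v_0)^n$ are linearly independent over $\mathbb{C}$; multiplying through by the nowhere-identically-zero holomorphic function $v_0^n$ gives the independence of $v_0^n,v_0^{n-1}v_1,\dots,v_1^n$.

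For the membership $v_0^{n-k}v_1^k\in H$ ($0\le k\le n$) I would track how $P$ propagates a family of monomials. For $0\le m\le n$ let $M_m\subseteq K$ be the $\mathbb{C}$-span of the functions $\|v\|^{2(m-n-2)}\,v_0^pv_1^q\bar{v}_0^{\,r}\bar{v}_1^{\,s}$ with $p,q,r,s\ge 0$, $p+q=n-m$ and $r+s=m$, and set $M_{n+1}:=\{0\}$. Since $\|v\|^{-2(n+2)}v_0^{n-k}v_1^k\in M_0$, it suffices to show $P(M_m)\subseteq M_{m+1}$ for every $0\le m\le n$; this yields $P^{n+1}\big(\|v\|^{-2(n+2)}v_0^{n-k}v_1^k\big)\in M_{n+1}=\{0\}$, i.e. equation $(\ref{equ:P})$ with $f=v_0^{n-k}v_1^k$.

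Everything then comes down to a single monomial computation. Using $\partial_z\|v\|^2=v_0'\bar{v}_0+v_1'\bar{v}_1$, the product rule, and the relation $v_0v_1'-v_1v_0'\equiv 1$ to eliminate one term, one obtains for $h=\|v\|^{2a}v_0^pv_1^q\bar{v}_0^{\,r}\bar{v}_1^{\,s}$
\begin{align*}
Ph&=\partial_z\!\big(\|v\|^{2(a+2)}v_0^pv_1^q\bar{v}_0^{\,r}\bar{v}_1^{\,s}\big)\\
&=\|v\|^{2(a+1)}\bar{v}_0^{\,r}\bar{v}_1^{\,s}\Big\{\bar{v}_0\,v_0^pv_1^{q-1}\big[(a+p+q+2)v_0'v_1+q\big]+\bar{v}_1\,v_0^{p-1}v_1^q\big[(a+p+q+2)v_0v_1'-p\big]\Big\}.
\end{align*}
Every monomial that occurs in $M_m$ satisfies $a+p+q+2=(m-n-2)+(n-m)+2=0$, so the two first-order terms drop out and
\[
Ph=\|v\|^{2(a+1)}\big(q\,v_0^pv_1^{q-1}\bar{v}_0^{\,r+1}\bar{v}_1^{\,s}-p\,v_0^{p-1}v_1^q\bar{v}_0^{\,r}\bar{v}_1^{\,s+1}\big),
\]
which lies in $M_{m+1}$: a term carrying a negative exponent has vanishing coefficient and does not appear, and when $m=n$ (so $p=q=0$) the right-hand side is $0$. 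This proves the inclusion and hence the lemma.

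The part that takes real insight is not the bookkeeping with the spaces $M_m$ but the displayed identity for $Ph$ and, especially, the cancellation of its first-order terms, which happens exactly when $a+p+q+2=0$. It is this single arithmetic constraint that couples the weight exponent in $\|v\|^{-2(n+2)}$ to total degree $n$ in $(v_0,v_1)$ and thereby singles out the rational normal map $r_n$; indeed, reading the constraint backwards is essentially the route by which one is led to $r_n$ in the first place.
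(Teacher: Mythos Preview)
Your proof is correct, and the core computation---applying $P$ once to $\|v\|^{2a}v_0^pv_1^q\bar v_0^{\,r}\bar v_1^{\,s}$ and observing the cancellation when $a+p+q+2=0$---is exactly the one the paper performs. The only difference is organizational: the paper inducts on $n$, reducing $P^{n+1}\big(\|v\|^{-2(n+2)}v_0^kv_1^{n-k}\big)$ after one application of $P$ to the case of degree $n-1$ (the anti-holomorphic factors $\bar v_0,\bar v_1$ commute past $P$), whereas you work at fixed $n$ and iterate through the explicit filtration $M_0\to M_1\to\cdots\to M_{n+1}=\{0\}$. Your packaging is arguably cleaner since it never leaves the single value of $n$ under discussion and makes the bookkeeping with the anti-holomorphic degrees transparent; the paper's induction buys nothing extra here.
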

\begin{proof}
We divide the proof into two steps.
\step {\it For all $0\leq k\leq n$, $v_0^kv_1^{n-k}$ solve (\ref{2.4})}.\\
We prove it by induction.
\begin{enumerate}
\item By direct computation, we find that $v_0,\, v_1$ solve $P^2(\| v\|^{-6}\hat{f}_0)=0$.

\item Suppose that for all $0\leq k\leq n-1$, $P^n\big(\| v\|^{-2(n+1)}v_0^kv_1^{n-k-1}\big)=0$. Then we show that for all $0\leq k\leq n$, $P^{n+1}\big(\| v\|^{-2(n+2)}v_0^kv_1^{n-k}\big)=0$. Actually, we have
\begin{equation}
		\begin{aligned}
			 &P\big(\| v\|^{-2(n+2)}v_0^kv_1^{n-k}\big)\\
			=&\frac{\partial}{\partial z}\biggl(\| v\|^{-2n}v_0^kv_1^{n-k}\biggr)\\
			=&-n\| v\|^{-2(n+1)}v_0^kv_1^{n-k}(v_z,v)+k\| v\|^{-2n}v_0^{k-1}v_1^{n-k}v_0^\prime+(n-k)\| v\|^{-2n}v_0^kv_1^{n-k-1}v_1^\prime\\
			=&k\| v\|^{-2(n+1)}v_0^{k-1}v_1^{n-k}(v_0^\prime\| v\|^2-v_0(v_z,v))\\
			 &+(n-k)\| v\|^{-2(n+1)}v_0^kv_1^{n-k-1}(v_1^\prime\| v\|^2-v_1(v_z,v))\\
			=&-k\bar{v}_1\| v\|^{-2(n+1)}v_0^{k-1}v_1^{n-k}+(n-k)\bar{v}_0\| v\|^{-2(n+1)}v_0^kv_1^{n-k-1}\quad {\rm and}
		\end{aligned}
\nonumber
	\end{equation}
\begin{equation}
\begin{aligned}
&P^{n+1}(\| v\|^{-2(n+2)}v_0^kv_1^{n-k})\\
=&P^n(-k\bar{v}_1\| v\|^{-2(n+1)}v_0^{k-1}v_1^{n-k}+(n-k)\bar{v}_0\| v\|^{-2(n+1)}v_0^kv_1^{n-k-1})\\
=&-k\bar{v}_1P^n(\| v\|^{-2(n+1)}v_0^{k-1}v_1^{n-k})+(n-k)\bar{v}_0P^n(\| v\|^{-2(n+1)}v_0^kv_1^{n-k-1})\\
=&0
\end{aligned}
\nonumber
\end{equation}
\end{enumerate}

\step {\it $\{v_0^n,v_0^{n-1}v_1,\cdots,v_1^n\}$ is a base of $H$}. In fact,
we observe that $v_0^n,v_0^{n-1}v_1,\cdots,v_1^n$ are linearly independent over ${\Bbb C}$.
On the other hand, since $\dim_{\mathbb{C}}H\leq n+1$,
$H$ is spanned by $v_0^n,\, v_0^{n-1}v_1,\,\cdots,v_1^n$ over ${\Bbb C}$.
\end{proof}
This lemma tells us that $\hat{f}_0=\mathcal{A}\circ
\left[(\frac{1}{n!})^{\frac{1}{2}}v_0^n:(\frac{1}{(n-1)!1!})^{\frac{1}{2}}v_0^{n-1}v_1:
\cdots:(\frac{1}{n!})^{\frac{1}{2}}v_1^n\right]$, where $\mathcal{A}$  is a non-degenerate linear transformation on ${\Bbb C}^{n+1}$. Moreover, we could assume $\mathcal{A}\in R_{n+1}$ modulo unitary transformations.

\section{Proof of Theorem \ref{m.s.}}
\begin{proof}[Proof of Theorem \ref{m.s.}]
\quad\\
"ONLY IF" part:\par
Let the corresponding curve of $\{u_i\}$ be $\phi=\mathcal{A}\circ r_n\circ v$, where $R=(r_{i,j})_{n+1\times n+1}\in {\mathcal R}_{n+1}, v=[v_0:v_1]\in H_1$. Denote $f=\frac{v_1}{v_0}$ and $v_0(z)v'_1(z)-v_1(z)v'_0(z)\equiv 1$ on $\Omega$. Then $v=[(f^\prime)^{-\frac{1}{2}}:(f^\prime)^{-\frac{1}{2}}f]$, which means
$$\phi=(f^\prime)^{-\frac{n}{2}}(\frac{\sum_{i=0}^nr_{1,i+1}f^i}{(n!)^{1/2}},\frac{\sum_{i=1}^nr_{2,i+1}f^i}{(1!(n-1)!)^{1/2}},\cdots,\frac{r_{n+1,n+1}f^n}{(n!)^{1/2}}).$$\par
Let $R_{i,j}=r_{i,j}((i-1)!(n-i+1)!)^{-\frac{1}{2}}$. Then we have the following:
\begin{equation}
\begin{aligned}
\|\phi\|^2&=|f^\prime|^{-n}\sum_{i=1}^{n+1}|\sum_{j=i}^{n+1}R_{i,j}f^{j-1}|^2\\
						&=|f^\prime|^{-n}(\sum_{i=1}^{n+1}\sum_{j=1}^{n+1}\sum_{k=1}^{min\{i,j\}}R_{k,i}\bar{R}_{k,j}f^{i-1}\bar{f}^{j-1})
\end{aligned}
\nonumber
\end{equation}
Thus, we can get $u_1$.
\begin{equation}
\begin{aligned}
u_1 &=-2\log\|\Lambda_0(\phi)\|^2+\log\|\Lambda_1(\phi)\|^2\\
	&=\log(\frac{\partial^2}{\partial z\partial\bar{z}}log(\|\phi\|^2))\\
	&=\log(\frac{\partial^2}{\partial z\partial\bar{z}}log(|f^\prime|^{-n}(\sum_{i=1}^{n+1}\sum_{j=1}^{n+1}\sum_{k=1}^{min\{i,j\}}R_{k,i}\bar{R}_{k,j}f^{i-1}\bar{f}^{j-1})))\\
	&=\log\frac{\sum_{i,j,l,m=1}^{n+1}(\sum_{k=1}^{min\{i,j\}}R_{k,i}\bar{R}_{k,j})(\sum_{k=1}^{min\{l,m\}}R_{k,l}\bar{R}_{k,m})(f^{i-1}\bar{f}^{j-1}f^{\prime l-1}\bar{f^\prime}^{m-1}-f^{i-1}\bar{f^\prime}^{j-1}f^{\prime l-1}\bar{f}^{m-1})}{(\sum_{i=1}^{n+1}\sum_{j=1}^{n+1}\sum_{k=1}^{min\{i,j\}}R_{k,i}\bar{R}_{k,j}f^{i-1}\bar{f}^{j-1})^2}\\
	&=\log\frac{\sum_{i,j,l,m=1}^{n+1}R_{i,j,l,m}(f^{i-1}\bar{f}^{j-1}f^{\prime l-1}\bar{f^\prime}^{m-1}-f^{i-1}\bar{f^\prime}^{j-1}f^{\prime l-1}\bar{f}^{m-1})}{(\sum_{i=1}^{n+1}\sum_{j=1}^{n+1}\sum_{k=1}^{min\{i,j\}}R_{k,i}\bar{R}_{k,j}f^{i-1}\bar{f}^{j-1})^2}\\
\end{aligned}
\nonumber
\end{equation}
Because $f^\prime=\frac{1}{v_0^2}$ and $v$ is total unramified, $f$ is nondegenerate.\\
"IF" part:\par
Assume
$$u_1=\log\frac{\sum_{i,j,l,m=1}^{n+1}R_{i,j,l,m}(f^{i-1}\bar{f}^{j-1}f^{\prime l-1}\bar{f^\prime}^{m-1}-f^{i-1}\bar{f^\prime}^{j-1}f^{\prime l-1}\bar{f}^{m-1})}{(\sum_{i=1}^{n+1}\sum_{j=1}^{n+1}\sum_{k=1}^{min\{i,j\}}R_{k,i}\bar{R}_{k,j}f^{i-1}\bar{f}^{j-1})^2}.$$
Let $A=(r_{i,j})_{n+1\times n+1}\in {\mathcal R}_{n+1},v=[(-f^\prime)^{-\frac{1}{2}}f:(-f^\prime)^{-\frac{1}{2}}]=[f:1]$, and $\phi=\mathcal{A}\circ r_n\circ v\in H_n$, where
\begin{equation}
r_{i,j}=
\begin{cases}
R_{i,j}((i-1)!(n-i+1)!)^{\frac{1}{2}},\forall i\leq j\\
0, \forall i>j
\end{cases}
\nonumber
\end{equation}
Besides, $\phi$ is totally unramified, so we can assume$\{\tilde{u}_i\}$ be the corresponding solution of $\phi$.We can calculate that $\tilde{u}_1=u_1$. That means $\tilde{u}_i=u_i,\,\forall i.$ So $\{u_i\}$ is the corresponding solution of $\phi$.\par
Thus $\{u_i\}$ is a reduced solution.
\end{proof}

\noindent\textbf{Acknowledgement:}
First of all, C.W. would like to give my heartfelt thanks to all the people who have ever helped me with this paper. His sincere and hearty thanks and appreciations go firstly to the professors at USTC, who have given me valuable suggestions for academic studies.
C.W is also extremely grateful to all my friends and classmates who have kindly provided me with assistance and companionship in the course of preparing this paper.
In addition, many thanks go to his family for their unfailing love and unwavering support.
Finally, C.W. is grateful to all those who devote much time to reading this thesis and giving me much advice, which will benefit me in my later study.

Y.S. is supported in part by the National Natural Science Foundation of China (Grant No. 11931009) and Anhui Initiative in Quantum Information Technologies (Grant No. AHY150200).
B.X. would like to express his deep gratitude to  Professor Zhijie Chen at Tsinghua University, Professor Zhaohu Nie at
University of Utah and Professor Guofang Wang at University of Freiburg for their stimulating conversations on Toda systems. Moreover, his research is supported in part by the National Natural Science Foundation of China (Grant Nos. 12271495, 11971450 and 12071449) and the CAS Project for Young Scientists in Basic Research (YSBR-001).\\


\bibliographystyle{amsplain}

\end{document}